\theoremstyle{plain}
\newtheorem{Definition}{Definition}
\newtheorem{Theorem}{Theorem}
\newtheorem{Proposition}[Theorem]{Proposition}
\begin{document}

\title{Diophantine properties of Brownian motion: recursive aspects}
\author{Willem L Fouch\' e}%\thanks{This paper was written with support from the national science foundation. Grant 123}}
%Title and authors' names for the running head (shorten if too long, otherwise just the same as above) 
\runningtitle{Diophantine properties of Brownian motion: recursive aspects}
\runningauthor{Willem L Fouch\' e}
\author{Willem L. Fouch\'{e}\\
\it Department of Decision Sciences,\\\it School of Economic Sciences\\\it University of South
Africa, PO Box 392, 0003 Pretoria, South
Africa\\fouchwl@gmail.com}
%\date{}
% Please add your affiliation here as a comment.
% it will be printed on a separate page in the book

\altmaketitle
%\maketitle

\begin{abstract}
We use recent  results on the Fourier analysis of the zero sets of Brownian motion to  explore the diophantine properties of an algorithmically random Brownian motion ( also known as a  complex oscillation). We discuss the construction and definability of perfect sets which are linearly independent over the rationals directly from Martin-L\" of random reals. Finally we explore the recent work of Tsirelson on countable dense sets to study the diophantine properties of local minimisers of Brownian motion.
\end{abstract}

%\begin{document}
%\maketitle
\section{Introduction} 
\label{Introduction}

%\usepackage{amsmath}
%\usepackage{amsfonts}
%\usepackage{amssymb}
%$\boxplus_mK$$\biguplus_mK$ 

A Brownian motion on the unit interval is  {\it
algorithmically random} if it meets all effective (Martin-L\" of) statistical tests, now expressed in terms of
the statistical events associated with Brownian motion on the unit
interval.  The class of functions corresponds exactly, in  the language of Weihrauch \cite{wei:1,wei:2},  G\'acs \cite{gac:1} and specialised by Hoyrup and Rojas \cite{horo:1},  in the context of algorithmic randomness, to the Martin-L\" of random elements of the computable measure space 
 \[ \mathcal{R}= (C_0[0,1],d,B,W),\] where $C_0[0,1]$ is the set of the continuous functions on the unit interval that vanish at the origin, $d$ is the metric induced by the uniform norm, $B$ is the countable set of piecewise linear functions vanishing at the origin with slopes and  points of non-differentiability all rational numbers and where $W$ is the Wiener measure. We
shall also refer to such a Brownian motion as a {\it complex
oscillation}. This terminology was suggested to the author   by the following Kolmolgorov theoretic interpretation  of this notion
\cite{ap:1}: One can characterise a Brownian motion which is
generic (in the sense just stated) as an effective and uniform limit
of a sequence $(x_n)$ of ``finite random walks'', where, moreover,
each $x_n$ can be encoded by a finite binary string $s_n$ of length
$n$, such that the (prefix-free) Kolmogorov complexity, $K(s_n)$, of $s_n$
satisfies, for some constant $d > 0$, the inequality $K(s_n) > n-d$
for all values of $n$. (See Definition \ref{definition:ap}, introduced by Asarin and Prokovskii \cite{ap:1}, in Section \ref{section:co} below.)
We shall  study the  images of certain ultra-thin sets (perfect sets of Hausdorff dimension zero) under a complex oscillation. We have shown  in  \cite{fo:5} that these images are perfect sets whose elements are linearly independent over the field of rational numbers. In this paper we discuss the definability of these sets, within the recursion-theoretic hierarchy, by exploiting the recursive isomorphism constructed in \cite{fo:2} between the Kolmogorov-Chaitin random reals and the class of suitably encoded versions of  complex oscillations. We shall also utilise Tsirelson's theory of countable dense random sets \cite{tsi:1} to study the diophantine properties of the local minimisers of Brownian motion. The local minimizers of a complex oscillation is studied in \cite{fo:3}.

We  shall utilise recent results by Mukeru and the author \cite{fomu:1}  on the rate of decay of the Fourier transform of the delta function of a continuous version of Brownian motion to  identify some diophantine properties of the zero set of a complex oscillation. For more on the Fourier and consequent Diophantine properties of the sample paths of Brownian motion the reader is referred to the paper by \L aba and Pramanik \cite{LaPr:1}.  We shall also show that some of these phenomena can be expressed within the hyperaritmetical hierarchy and pose the problem as to whether this is essentially so.

 The author is very grateful to the Department of Mathematics at the Corvinus University, Budapest, for hosting my frequent visits to the department and for teaching and sharing with me so much of the subtleties of  measure theory and stochastic processes.

This  research is being partially supported by the National Research Foundation (NRF) of South Africa as well as by a Marie Curie International Research
Staff Exchange Scheme Fellowship (COMPUTAL PIRSES-GA-2011- 294962) within the 7th European Community Framework Programme.

%%%%%%%%%%%%%%%%%%%%%%%%%%%%%%%%%%%%%%%%%%%%%%%%%%%%%%%%%%%%%%%%%%%%%%%%%%%
\section{Preliminaries from Brownian motion and geometric measure theory}
 A random variable $X$ with mean $\mu$ and variance $\sigma^2$ is {\it normal} if it has a density function of the form 
\[\frac{1}{\sqrt{2\pi}\;\sigma}\;e^{-(t-\mu)^2/2\sigma^2}.\]
If $(\Omega, \mathcal{F}, P)$ is a probability space and $X$ is a real-valued random variable on $\Omega$, the measure $\mu$ on $\mathcal{F}$ given by
\[F \mapsto P(X^{-1}(F)),\; \; F \in \mathcal{F},\]
is called the {\it distribution} of $X$. %If $X_1,X_2$ are independent random variables on $\Omega$ having distributions $\mu_1,\mu_2$ respectively, the distribution of $X_1+X_2$ is given by the convolution product $\mu_1*\mu_2$ of the two distributions $\mu_1$ and $\mu_2$.

A Brownian motion on the unit interval is a real-valued function $(\omega,t) \mapsto X_\omega(t)$ on $\Omega \times [0,1]$, where $\Omega$ is the underlying space of some probability space, such that $X_\omega(0)=0$ a.s. and for $t_1 < \ldots < t_n$ in the unit interval, the random variables $X_\omega(t_1),X_\omega(t_2)-X_\omega(t_1), \cdots, X_\omega(t_n)-X_\omega(t_{n-1})$ are statistically independent and normally distributed with means all $0$ and variances $t_1,t_2-t_1,\cdots,t_n-t_{n-1}$, respectively.

It is a fundamental fact that any Brownian motion has a ``continuous version''(see, for example \cite{fre:1}). This means the following: Write $\Sigma$ for the $\sigma$-algebra of Borel sets of $C[0,1]$ where the latter is topologised by the uniform norm topology. There is a unique probability measure $W$ on $\Sigma$ such that for $0\leq t_1 < \ldots <t_n \leq 1 $ and  for a Borel subset $B$ of $\mathbb{R}^n$, we have
\[ P(\{\omega \in \Omega:(X_\omega(t_1), \cdots, X_\omega(t_n)) \in B \}) = W(A) ,\]
where
\[A =\{x \in C[0,1]: (x(t_1), \cdots, x(t_n)) \in B\}).\]
The measure $W$ is known as the {\it Wiener measure}. We shall usually write $X(t)$ instead of $X_\omega(t)$. 

%\section{Tsirelson's results on local minima of Brownian motion}
%\label{section:tsirelson}
%If $X$ is a continuous version of one-dimensional Brownian motion, denote by $\mathcal{N}(X)$ the set of local minima of $X$. It is well-known that $\mathcal{N}(X)$ is almost surely a dense and countable set and that all the local minima of $X$ are strict. This means that, for each closed subinterval $I$ of the closed unit interval, there is a unique $\nu \in I$ where the minimum of $X$ on $I$ is assumed.

%\begin{theorem}
% IIn \cite {tsi:1} Tsirelson also proved the following deep and remarkable theorem:
%f $X$ is a continuous version of Brownian motion on the unit interval and $B$ is a Borel subset of the unit interval such that $\lambda(B) > 0$, then almost surely, $B \cap \mathcal{N}(X) \neq \emptyset$. On the other hand, if $\lambda(B) = 0$, then almost surely, $B \cap \mathcal{N}(X) = \emptyset$. In particular, if $\lambda(B)=1$, then, almost surely, $\mathcal{N}(X) \subset B$.
%\label{theorem:tsirelson}
%\end{theorem}

%This theorem gives a clear expression of the extent to which the local minima of $X$ can be viewed as a ``generic'' countable dense random set. It is an interesting problem to investigate the extension of  these results to random (relative to the probability space on which $X$ has been defined) Borel sets $B$. We shall discuss some instances of this problem in  this paper. 

For a compact subset $A$ of Euclidean space $\mathbb{R}^d$ and real numbers $\alpha,\epsilon$ with $0 \leq \alpha < d$ and $\epsilon > 0$, consider all coverings of $A$ by balls $B_n$ of diameter $\leq \epsilon$ and the corresponding sums \[\sum_n|B_n|^\alpha,\]
where $|B|$ denotes the diameter of $B$. All the metric notions here are to be understood in terms of the standard $\ell^2$ norms on Euclidean space. The infimum of the sums over all coverings of $A$ by balls of diameter $\leq \epsilon$ is denoted by $H_\alpha^\epsilon(A)$. When $\epsilon$ decreases to $0$, the corresponding $H_\alpha^\epsilon(A)$ increases to a limit (which may be infinite). The limit is denoted by $H_\alpha(A)$ and is called the Hausdorff measure of $A$ in dimension $\alpha$.

If $0 < \alpha < \beta \leq d$, then, for any covering $(B_n)$ of $A$,
\[\sum_n|B_n|^\beta \leq \sup_n|B_n|^{\beta-\alpha}\sum_n|B_n|^\alpha,\]
from which it follows that
\[ H_\beta^\epsilon(A) \leq \epsilon^{\beta - \alpha}H_\alpha^\epsilon(A).\]
Hence if $H_\alpha(A) < \infty$, then $H_\beta(A)=0$. Equivalently,
\[ H_\beta(A) > 0 \Longrightarrow H_\alpha(A)=\infty.\]
Therefore,
\[ \sup\{\alpha:H_\alpha(A)=\infty\}=\inf\{\beta:H_\beta(A)=0\}.\]
This common value is called the Hausdorff dimension of $A$ and denoted by dim$_hA$.

If $\alpha$ is such that $0 < H_\alpha(A) < \infty$, then $\alpha =$ dim$_h A$. However, if $\alpha =$ dim$_h A$, we cannot say anything about the value of $H_\alpha(A)$.

 It is easy to check that $A \mapsto H_\alpha(A)$ defines an outer measure which is invariant under translations and rotations, and homogeneous of degree $\alpha$ with respect to dilations.

%In the definition we used only the structure of $A$ as a metric space. Therefore, the notions $H_\alpha(A)$ and $\mbox{dim}A$ make sense for any metric space.

If $A$ is a Borel subset of Euclidean space, the set of non-zero Radon measures with support contained in $A$ is denoted by $M^+(A)$. For a given $\mu \in M^+(A)$, the energy integral of $\mu$ with respect to the kernel $|x|^{-\alpha}$ is given by
\[I_\alpha(\mu) = \int_{\mathbb{R}^d}\int_{\mathbb{R}^d}\frac{d\mu(x)d\mu(y)}{|x-y|^\alpha}.\]
We say that $\mu$ has finite energy with respect to $|x|^{-\alpha}$ when $I_\alpha(\mu) < \infty$. If $A$ carries positive measures of finite energy with respect to $|x|^{-\alpha}$ we say that $A$ has positive capacity with respect to $|x|^{-\alpha}$ and we write 
\[\mbox{Cap}_\alpha(A) > 0.\]
If $A$ carries no positive measure of finite energy with respect to $|x|^{-\alpha}$, we say that $A$ has capacity zero with respect to this kernel and we write $\mbox{Cap}_\alpha(A)=0$.

It follows from the Fourier analysis of temperate distributions that
\begin{equation}
I_\alpha(\mu)=C(\alpha,d)\int_{\mathbb{R}^d}|\hat\mu(\xi))|^2|\xi|^\alpha\frac{d\xi}{|\xi|^d},
\label{equation:energy}
\end{equation}
when $0 < \alpha < d$, where $C(\alpha,d)$ is a positive constant and where,  moreover,
\[\hat\mu(\xi)=\int_{\mathbb{R}^d}e^{is\xi}d\mu(s),\]
is the Fourier transform of the measure $\mu$. (For a proof see \cite{mat:1}.)

 We shall make frequent use of the following  very fundamental fact.
\begin{Proposition}
For a compact subset $A$ of $\mathbb{R}^d$ and $0 < \alpha < \beta < d$,
\[H_\beta(A) > 0 \Rightarrow \mbox{\emph {Cap}}_\alpha(A)>0 \Rightarrow H_\alpha(A) > 0.\]
Hence \[\sup\{\alpha:\mbox{\emph {Cap}}_\alpha(A) > 0\} = \mbox{\emph{dim}}_h A,\]
or, equivalently,
\[\sup\{\alpha:I_\alpha(\mu) < \infty\} = \mbox{\emph{dim}}_h A.\]
\label{proposition:hausdorfcapacity}
\end{Proposition}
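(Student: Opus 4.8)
The plan is to prove the two implications and then read off the two displayed reformulations. Throughout, write $B(x,r)$ for the closed ball of radius $r$ centred at $x$ and $U^{\mu}_{\alpha}(x)=\int_{\mathbb{R}^{d}}|x-y|^{-\alpha}\,d\mu(y)$ for the $\alpha$-potential of $\mu$, so that $I_{\alpha}(\mu)=\int U^{\mu}_{\alpha}\,d\mu$. For the implication $\mathrm{Cap}_{\alpha}(A)>0\Rightarrow H_{\alpha}(A)>0$ I would start with $\mu\in M^{+}(A)$ having $I_{\alpha}(\mu)<\infty$; then $U^{\mu}_{\alpha}$ is finite $\mu$-almost everywhere, so there is $M<\infty$ with $\mu(A')>0$ for $A'=\{x\in A: U^{\mu}_{\alpha}(x)\le M\}$. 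For $x\in A'$ and $r>0$, the trivial bound $|x-y|^{-\alpha}\ge r^{-\alpha}$ on $B(x,r)$ gives $r^{-\alpha}\mu(B(x,r))\le U^{\mu}_{\alpha}(x)\le M$, i.e. $\mu(B(x,r))\le Mr^{\alpha}$. Given any cover of $A'$ by balls $B_{n}$ of diameter $\le\epsilon$, discard those disjoint from $A'$ and choose $x_{n}\in A'\cap B_{n}$; then $B_{n}\subseteq B(x_{n},|B_{n}|)$, so $\mu(B_{n})\le M|B_{n}|^{\alpha}$ and hence $\sum_{n}|B_{n}|^{\alpha}\ge M^{-1}\mu(A')$. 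Letting $\epsilon\downarrow0$ yields $H_{\alpha}(A)\ge H_{\alpha}(A')\ge M^{-1}\mu(A')>0$; this is the classical mass distribution principle.

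For the implication $H_{\beta}(A)>0\Rightarrow\mathrm{Cap}_{\alpha}(A)>0$ the crucial input is Frostman's lemma: if $H_{\beta}(A)>0$ there is $\mu\in M^{+}(A)$ with $\mu(B(x,r))\le r^{\beta}$ for all $x$ and all $r>0$. I would obtain this by the standard dyadic construction: fix a dyadic cube $Q_{0}\supseteq A$; for each $n$ distribute mass $(\mathrm{side}\,Q)^{\beta}$ over the generation-$n$ dyadic subcubes $Q$ of $Q_{0}$ that meet $A$, then pass from generation $n$ up to generation $0$, rescaling the mass inside a cube whenever it would exceed $(\mathrm{side})^{\beta}$; this produces $\mu_{n}\in M^{+}$ with $\mu_{n}(Q)\le(\mathrm{side}\,Q)^{\beta}$ for every dyadic $Q$ and with total mass $\mu_{n}(Q_{0})$ bounded below by a constant depending only on $d$ times the Hausdorff content $H^{\infty}_{\beta}(A)$, which is positive precisely because $H_{\beta}(A)>0$. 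A weak-$*$ cluster point $\mu$ of $(\mu_{n})$ is supported on $A$, has positive total mass, and satisfies $\mu(Q)\le(\mathrm{side}\,Q)^{\beta}$ for all dyadic $Q$, whence $\mu(B(x,r))\le Cr^{\beta}$ since $B(x,r)$ is covered by a bounded number of dyadic cubes of side comparable to $r$; rescaling $\mu$ removes $C$. With such a $\mu$ and $R=\mathrm{diam}\,A$, the layer-cake formula gives, uniformly in $x$,
\[
U^{\mu}_{\alpha}(x)=\int_{0}^{\infty}\mu\bigl(B(x,t^{-1/\alpha})\bigr)\,dt\le R^{-\alpha}\,\mu(A)+\int_{R^{-\alpha}}^{\infty}t^{-\beta/\alpha}\,dt<\infty,
\]
the last integral converging because $\beta/\alpha>1$; integrating against $\mu$ shows $I_{\alpha}(\mu)<\infty$, so $\mathrm{Cap}_{\alpha}(A)>0$.

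Finally, the two displayed equalities follow formally. If $\gamma<\mathrm{dim}_{h}A$, pick $\beta$ with $\gamma<\beta<\mathrm{dim}_{h}A$ and $\beta<d$; then $H_{\beta}(A)=\infty>0$ by the characterisation of $\mathrm{dim}_{h}A$ recorded above, so the first implication gives $\mathrm{Cap}_{\gamma}(A)>0$, and hence $\sup\{\alpha:\mathrm{Cap}_{\alpha}(A)>0\}\ge\mathrm{dim}_{h}A$. Conversely, if $\mathrm{Cap}_{\alpha}(A)>0$ then $H_{\alpha}(A)>0$ by the second implication, which forces $\alpha\le\mathrm{dim}_{h}A$; hence the supremum equals $\mathrm{dim}_{h}A$, and the reformulation in terms of $I_{\alpha}(\mu)$ is just the definition of $\mathrm{Cap}_{\alpha}(A)>0$. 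The one genuinely non-routine step is Frostman's lemma in the second paragraph — specifically the dyadic mass-redistribution argument together with the lower bound on the mass of the approximating measures in terms of the $\beta$-content; everything else (the mass distribution principle and the two energy estimates) is elementary once the growth condition $\mu(B(x,r))\le Cr^{\beta}$ is in hand, and one may alternatively simply cite Frostman's lemma from \cite{mat:1}.
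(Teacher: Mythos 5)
The paper does not actually prove this proposition: it is presented as a ``very fundamental fact'' from geometric measure theory, with the surrounding discussion referring the reader to Mattila \cite{mat:1} for the Fourier-analytic identity~(\ref{equation:energy}) and, implicitly, for the capacity–dimension relationship itself. So there is no in-paper argument to compare against.

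Your proof is correct and is the standard textbook argument. A few remarks. The first implication (positive capacity gives positive Hausdorff measure) is exactly the mass distribution principle applied to the sublevel set of the potential, and you have the estimate $\mu(B(x,r))\le Mr^{\alpha}$ on that set stated correctly. For the second implication you rightly identify Frostman's lemma as the only non-routine step; your dyadic redistribution sketch is the usual construction and the pivot from Hausdorff measure to Hausdorff content ($H_\beta(A)>0$ iff $H^\infty_\beta(A)>0$ for compact $A$) is the right observation to make the lower bound on the total mass of $\mu_n$ usable. The layer-cake computation
\[
U^{\mu}_{\alpha}(x)=\int_{0}^{\infty}\mu\bigl(B(x,t^{-1/\alpha})\bigr)\,dt\le R^{-\alpha}\mu(A)+\int_{R^{-\alpha}}^{\infty}t^{-\beta/\alpha}\,dt
\]
is correct and the integral converges precisely because $\beta/\alpha>1$, which is where the strict inequality $\alpha<\beta$ in the hypothesis is used. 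The final passage from the two implications to the two displayed suprema is routine, as you say. For a paper of this kind the more economical route, and the one the author is implicitly taking, is simply to cite Frostman's lemma and the capacity characterisation of Hausdorff dimension from \cite{mat:1} (Chapter 8) rather than reproduce the dyadic construction; but nothing in your argument is wrong.
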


The Fourier dimension of a compact set is the supremum of positive real numbers $\alpha <1$ such that for some non-zero Radon measure $\mu$ supported by $E$ , it is the case that 
\[ |\hat\mu(\xi)|^2 \leq \frac{1}{|\xi|^\alpha}, \]
for $|\xi|$ sufficiently large. The Fourier dimension of $E$ is denoted by $\dim_f(E)$. Clearly, by (\ref{equation:energy}),
\[\dim_f(E) \leq \dim_h(E),\]
for all compact sets $E$. The set is called a {\it Salem} set if $\dim_f(E) = \dim_h(E)$.

The following question posed by Beurling was addressed %was posed by Beurling ??? 
and solved in the positive by Salem in 1950. (On singular monotonic functions whose spectrum has a
given Hausdorff dimension
By R. Salem (1950), Ark Mat 1,353-365.)

 Given a number $\alpha \in (0, 1)$, does there exist a closed set on the line whose Hausdorff dimension is $\alpha$  that carries a Borel  measure $\mu$ whose Fourier transform $$\hat \mu (u) = \int _{\mathbb{R}}e^{iux}d\mu(x)$$ is dominated by $|u|^{-\alpha/2}$ as $|u| \rightarrow \infty$?

It follows from (\ref{equation:energy}) that given a compact subset $E$ of $[0, 1]$ with Hausdorff dimension $\alpha \in (0, 1)$, the number $\alpha/2$ is critical for this question to have an affirmative answer. 

Salem  proved this result by constructing for every $\alpha$ in the unit interval, a random measure $\mu$ (over a convenient probability space)  whose support has Hausdorff dimension $\alpha$ and which satisfies the Beurling-requirement with probability one. 

It was recently shown by the author  in collaboration with George Davie and Safari Mukeru that such sets can also be constructed by looking at Cantor type sets $E$ with computable ratios $\xi$ and then to consider the image of $E$ under a complex oscillation.
%The mean, or expected value of a random variable $X$ will be denoted by $E(X)$. 
%Two random variables $X$ and $Y$ on possibly different probability spaces are said to be {\it similar} when they have the same probability distributions. 

The following theorem illustrates the rich diophantine structure of sets $E$ of non-zero Fourier dimension.  Even though the proof method is well-known in geometric measure theory, we give a full proof, for we need sharper estimates than waht the author could have found in the literature.
\begin{Theorem} (folklore)
Suppose $E$ is a compact subset of reals such that, for every $\epsilon >0$, there is some $\mu \in M_+(E)$ and $ 0< \alpha < 1$, such that, for some constant $C=C(\epsilon)$, it is the case that
\[|\hat\mu(\xi)|^2 \leq C|\xi|^{-\alpha + \epsilon},\]
as $|\xi| \rightarrow \infty$.
Then, if $k$ is a natural number such that $k\alpha > 1$, it will follow, upon writing 
\[E_k= E +\cdots +E\;\;(k\;\mbox {times}),\]
that
\[\mathbb{R} = \bigcup_{n < \omega} n(E_k -E_k).\]
Moreover, if $ A$ is any finite set of real numbers, then $E_k$ will contain an affine  (a translated and rescaled) copy of $A$.
\label{theorem:wlf10}
\end{Theorem}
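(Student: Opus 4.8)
The plan is to run the standard ``Fourier-analytic thick set'' argument, but to be careful with the exponents so that the hypothesis (a uniform family of measures with decay exponent approaching $\alpha$, rather than a single measure of decay $\alpha$) suffices. The basic idea is that convolving the measure $\mu$ with itself $k$ times produces a measure $\mu_k$ on $E_k$ whose Fourier transform is $\hat\mu(\xi)^k$, hence $|\hat\mu_k(\xi)|^2 \leq C^k|\xi|^{-k\alpha+k\epsilon}$; choosing $\epsilon$ small enough that $k\alpha - k\epsilon > 1$ still holds, this forces $\hat\mu_k \in L^2(\mathbb{R})$, so $\mu_k$ is absolutely continuous with a density $f \in L^2$. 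Moreover, since $\mu_k$ is compactly supported and $\hat\mu_k$ decays faster than $|\xi|^{-1/2-\delta}$ for a fixed $\delta>0$, Cauchy--Schwarz against $|\xi|^{-1/2-\delta}$ shows $\hat\mu_k \in L^1$, so in fact $f$ is continuous (indeed bounded). Thus $E_k$ supports a finite measure with a bounded continuous density $f$, and in particular $E_k$ has positive Lebesgue measure.

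First I would make the above precise: fix $k$ with $k\alpha>1$, pick $\epsilon>0$ with $k(\alpha-\epsilon)>1$, obtain the corresponding $\mu=\mu_\epsilon$ and $\alpha=\alpha_\epsilon$ from the hypothesis (here one has to observe that the hypothesis lets $\alpha$ depend on $\epsilon$, so one should first fix a value $\alpha_0>1/k$ that is attained, or take a convergent subsequence; this is a minor bookkeeping point). Then set $\mu_k = \mu * \cdots * \mu$ ($k$ times), note $\mathrm{supp}(\mu_k) \subseteq E_k$, and record $\widehat{\mu_k} = \hat\mu^{\,k}$. Then I would verify $\widehat{\mu_k}\in L^1\cap L^2$ as above and invoke Fourier inversion to get the bounded continuous density $f\ge 0$ with $\int f = \mu(E)^k>0$. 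Let $U=\{x: f(x)>0\}$, an open set of positive Lebesgue measure contained in the closure of $E_k$ (and since $E_k$ is compact, $\overline{E_k}=E_k$).

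Next, the covering statement $\mathbb{R}=\bigcup_{n<\omega} n(E_k-E_k)$. Since $U$ has positive measure, by the Steinhaus theorem the difference set $U-U$ contains an open interval $(-\eta,\eta)$ around $0$; since $U\subseteq E_k$, this gives $(-\eta,\eta)\subseteq E_k-E_k$, and then every real $x$ lies in $n(E_k-E_k)$ once $n>|x|/\eta$. For the final claim, that $E_k$ contains an affine copy of any finite set $A=\{a_1,\dots,a_m\}$: the set of $(t,s)\in\mathbb{R}\times\mathbb{R}_{>0}$ with $t+sa_i\in U$ for all $i$ is, by continuity of $f$, an open condition, and it is nonempty because as $s\to 0^+$ the points $t+sa_i$ all cluster at $t$, so choosing $t$ a Lebesgue density point of $U$ and $s$ small makes all $m$ constraints hold simultaneously (a more careful version: intersect the $m$ sets $\{(t,s): t+sa_i\in U\}$ and use that a small enough $s$, together with $t$ ranging over a positive-measure set, cannot avoid the intersection — one can also argue directly with density points). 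Then $\{t+sa_1,\dots,t+sa_m\}\subseteq U\subseteq E_k$ is the desired affine copy.

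The main obstacle is the passage from ``$\widehat{\mu_k}$ decays'' to ``$E_k$ contains a copy of every finite set'': getting positive Lebesgue measure is routine, but to place an \emph{arbitrary} finite configuration inside $E_k$ one really wants $E_k$ to contain an open set, which is why the sharper $L^1$ bound giving a continuous (not merely $L^2$) density is needed — this is presumably the ``sharper estimate'' the authors refer to. The delicate point in making $\widehat{\mu_k}\in L^1$ rigorous is handling the low-frequency range $|\xi|$ bounded (where the pointwise bound $|\xi|^{-k\alpha+k\epsilon}$ is not integrable near $0$): there one uses instead that $|\hat\mu_k|\le 1$ since $\mu_k$ is a finite measure (after normalizing $\mu$ to be a probability measure), so the integral over $\{|\xi|\le 1\}$ is trivially finite and only the tail $\{|\xi|>1\}$ requires the decay estimate. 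With that split the argument goes through cleanly.
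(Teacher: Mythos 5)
Your argument diverges from the paper's in a way that introduces a genuine gap. The paper stops at $\widehat{\nu} \in L^2$: this gives that $\nu$ is absolutely continuous, hence that $E_k \supseteq \mathrm{supp}\,\nu$ has positive Lebesgue measure, and then \emph{both} conclusions follow from positive Lebesgue measure alone --- Steinhaus's theorem gives an interval around $0$ in $E_k - E_k$, and the Lebesgue density theorem (as \L aba and Pramanik observe) gives the affine copies of any finite set. No open subset of $E_k$ is required, and indeed one should not expect one: a set of positive measure need not contain an interval (fat Cantor sets), yet still contains affine copies of every finite pattern by the density-point argument.

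The step you add --- upgrading from $\widehat{\mu_k} \in L^2$ to $\widehat{\mu_k} \in L^1$ in order to get a continuous density and hence an open set $U = \{f > 0\}$ --- does not follow from the hypothesis $k\alpha > 1$. With $k(\alpha - \epsilon) > 1$ you get $\lvert\widehat{\mu_k}(\xi)\rvert \lesssim \lvert\xi\rvert^{-1/2 - \delta}$ for some small $\delta > 0$, and $\lvert\xi\rvert^{-1/2-\delta}$ is integrable at infinity only when $\delta > 1/2$, i.e. when $k(\alpha-\epsilon) > 2$. Your Cauchy--Schwarz manoeuvre against $\lvert\xi\rvert^{-1/2-\delta}$ does not rescue this: weighting by $\lvert\xi\rvert^{\pm a}$ and optimizing still forces $k(\alpha-\epsilon) > 2$. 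A concrete failure: take $\alpha = 1/2$ (the zero set of Brownian motion, which is exactly the case the paper cares about) and $k = 3$; then $k\alpha = 3/2 > 1$ so the theorem should apply, but $\lvert\widehat{\mu_3}(\xi)\rvert \lesssim \lvert\xi\rvert^{-3/4 + O(\epsilon)}$ is not in $L^1$, so your route gives no continuous density and no open $U$. Your fallback remark (``one can also argue directly with density points'') is in fact the correct and necessary argument, not an optional refinement; the $L^1$/open-set route should be dropped.
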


\begin{proof} Set
\[ \nu = \mu *\cdots *\mu \;\;(k\;\mbox {times}).\] (Here $*$ denotes the convolution product.)
Clearly, by choosing $\epsilon  > 0$ such that $k(\alpha -\epsilon ) >1+\epsilon$, we have , for $|\xi|$ large,

\[|  \hat\nu(\xi)|^2 = |\hat \mu(\xi)^2|^k  \leq C^k|\xi|^{-k\alpha+k\epsilon }\leq  C^k|\xi|^{-1-\epsilon}.\]
It follows that the function $\hat\nu$ is in $L^2(\mathbb{R})$. Since $\nu$ is a non-zero measure, it follows from Parseval's theorem   that $\nu$ is absolutely continuous with respect to Lebesgue measure.  In particular, $ \mbox{supp}\;\nu$ has non-zero Lebesgue measure.  Since \[ \mbox{supp}\; \nu \subset E_k,\] we conclude that $E_k$ has non-zero Lebesgue measure.  It follows from Steinhaus's  theorem \cite{ste:1} that $E_k-E_k$ has zero as an interior point. This concludes the first part of the theorem.

The second part follows from  the following beautiful remark  \cite{LaPr:1}:  If $F$ is any set of positive Lebesgue measure, then $F$  will contain an affine copy of any finite set $A$ of real numbers. This is, as noted by \L aba and Pramanik   \cite{LaPr:1}, a consequence of Lebesgue's density theorem.
\end{proof}
\section{Complex oscillations}
\label{section:co}
The set of non-negative integers is denoted by $\omega$ and we write $\mathcal{B}$ for the Cantor space $\{0,1\}^\omega$. The set of words over the alphabet $\{0,1\}$ is denoted by $\{0,1\}^*$. If $a \in \{0,1\}^*$, we write $|a|$ for the length of $a$. If $\alpha=\alpha_0\alpha_1\ldots$ is in $\mathcal{B}$, we write $\overline{\alpha}(n)$ for the word $\prod_{j<n}\alpha_j$. We use the usual recursion-theoretic terminology $\Sigma_r^0$ and $\Pi_r^0$ for the arithmetical subsets of \(\omega^k \times \mathcal{B}^l,\;k,l \geq 0 \). (See, for example, \cite{hin:1}). We write $\lambda$ for the Lebesgue probability measure on $\mathcal{B}$. For a binary word $s$ of length $n$, say, we write $[s]$ for the ``interval'' $\{\alpha \in \mathcal{B}: \overline{\alpha}(n) = s \}$. A sequence $(a_n)$ of real numbers converges {\it effectively} to $0$ as $n \rightarrow \infty$ if for some total recursive $f:\omega \rightarrow \omega$, it is the case that $|a_n| \leq (m+1)^{-1}$ whenever $n \geq f(m)$.

For any finite binary word $a$ we denote its (prefix-free) Kolmogorov complexity by $K(a)$.  Recall that an infinite binary string $\alpha$ is Kolmogorov-Chaitin complex if 

\begin{equation}
\exists_d \forall_n\;K(\overline{\alpha}(n)) \geq n-d.
\label{eq:Kolmogorov}
\end{equation}
In the sequel, we shall denote this set by $KC$ and refer to its elements as $KC$-strings. (See, e.g., \cite{ch:2}, \cite{marlo:1} or \cite{nie:1} for more background.)

%We next survey the results from \cite{ap:1}, \cite{fo:1} and \cite{fo:2} which will play an important role in this discussion.  

For $n \geq 1$, we write $C_n$ for the class of continuous functions on the unit interval that vanish at $0$ and are linear with slopes $\pm \sqrt{n}$ on the intervals \([(i-1)/n,i/n]\;,i=1,\ldots ,n\). With every $x \in C_n$, one can associate a binary string $a =a_1\cdots a_n$ by setting $a_i=1$ or $a_i =0$ according to whether $x$ increases or decreases on the interval $[(i-1)/n,i/n]$. We call the sequence $a$ the code of $x$ and denote it by $c(x)$. The following notion was introduced by Asarin and Prokovskii in \cite{ap:1}.
\begin{Definition}

A sequence $(x_n)$ in $C[0,1]$ is \emph{ complex}  if $x_n \in C_n$ for each $n$ and there is a constant $d > 0$ such that $K(c(x_n)) \geq n-d$ for all $n$. A function $x \in C[0,1]$ is a \emph{complex oscillation} if there is a complex sequence $(x_n)$ such that $\|x-x_n\|$ converges effectively to $0$ as $n \rightarrow \infty$.\
\label{definition:ap}
\end{Definition}

The class of complex oscillations is denoted by $\mathcal{C}$. 

 In \cite{fo:2} the author constructed a bijection \(\Phi:KC \rightarrow \mathcal{C} \) which is  effective in the following sense: If $\alpha \in KC$ and $m < \omega$, one can effectively construct from the first $m$ bits of $\alpha$, a function $p_m$, where $p_m$ is a finite linear combination of piecewise linear functions with rational coefficients, such that, for some absolute positive constant $C$, the complex oscillation $\Phi(\alpha)$ is approximated by the sequence $(p_m)$ as follows:
\begin{equation}
\mbox{sup}_{t \in [0,1]}|\Phi(\alpha)(t) -p_m(t)| \leq C\log m/\sqrt{m}
\label{eq: wlf}
\end{equation}
for all $m > M_\alpha$, where $M_\alpha$ is a constant that depends on $\alpha$
only. Conversely, if $x \in \mathcal{C}$, then one can compute, relative to an
infinite binary string which encodes the values of a complex oscillation  $x$ at the rational
numbers in the unit interval, the $KC$-string $\alpha$ such that $\Phi(\alpha)
= x$.

In \cite{fo:2} the author proved:

\begin{Theorem}There is a uniform algorithm that, relative to any $KC$-string $\alpha$, with input a rational number $t$ in the unit interval and a natural number $n$, will output the first $n$ bits of the the value of the complex oscillation $\Phi(\alpha)$ at the value $t$. 
 \label{theorem:isomorphism}
\end{Theorem}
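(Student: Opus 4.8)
The plan is to derive the statement from the effective construction of $\Phi$ recorded in \cite{fo:2}, i.e.\ from the approximation (\ref{eq: wlf}), together with one supplementary fact: the real $\Phi(\alpha)(t)$ is never a dyadic rational, so that ``the first $n$ bits of its value'' is unambiguous. Fix $\alpha\in KC$ as oracle and a rational $t\in[0,1]$. The claim splits into two tasks. The first, and substantive, one is to produce, uniformly in $\alpha$ and $t$, an $\alpha$-computable sequence of rationals $(q_k)$ with $|q_k-\Phi(\alpha)(t)|\le 2^{-k}$ --- equivalently, to realise $\Phi(\alpha)(t)$ as a real that is computable relative to $\alpha$ with a modulus the machine actually outputs. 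The second is routine: from such a fast Cauchy sequence one reads off the first $n$ bits of $\Phi(\alpha)(t)$ by the standard procedure --- take $k$ with $2^{-k}\le 1$ to pin the sign and the integer part, then increase $k$ until $q_k$ is trapped strictly between two consecutive multiples of $2^{-n}$; this terminates precisely because $\Phi(\alpha)(t)$ is not itself such a multiple. Both tasks are uniform in $(\alpha,t,n)$, so together they yield the single algorithm required.

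For the first task, recall from \cite{fo:2} that $p_m$ is assembled from the first $m$ bits of $\alpha$ by decoding successive self-delimiting blocks, so that a prefix of $\alpha$ produces $p_m$ together with the list of blocks already consumed; as $p_m$ is piecewise linear with rational data, $p_m(t)$ is an $\alpha$-computable rational, uniformly in $m$ and $t$. The only obstruction to setting $q_k:=p_m(t)$ for a suitable $m$ is that (\ref{eq: wlf}) controls $\|\Phi(\alpha)-p_m\|$ only for $m$ past the stage $M_\alpha$, which is not handed to the algorithm. If the construction in \cite{fo:2} already lets one compute from $\alpha$ some upper bound for $M_\alpha$, we are done at once. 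Otherwise one extracts an $\alpha$-computable modulus by a closer look at the decoding: the estimate underlying (\ref{eq: wlf}) is a large-deviation bound on the decoded coefficients, valid because $\alpha$ is a $KC$-string, and the incompressibility of $\alpha$ is exactly what lets one run that bound at a stage that a long enough prefix of $\alpha$ certifies, yielding an effective bound on the not-yet-decoded tail of the expansion; one then reads $\alpha$ until that bound drops below $2^{-k}$ and outputs $q_k:=p_m(t)$ at the corresponding frontier. Note that no modulus here can be uniform over $KC$: since $\mathcal C$ consists of \emph{all} the Martin-L\"of random elements of $\mathcal R$ and the ball $\{x:\|x\|\le R\}$ has Wiener measure $<1$ for every $R$, the sup norm $\|\Phi(\alpha)\|$ is unbounded on $KC$, so the $KC$-hypothesis must enter essentially.

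The supplementary fact is quick. The case $t=0$ is trivial, since $\Phi(\alpha)(0)=0$; so assume $t>0$. Under the Wiener measure the distribution of the evaluation $x\mapsto x(t)$ is normal with mean $0$ and variance $t$, hence non-atomic, and the events $\{x:|x(t)-c|<1/k\}$ have $W$-measure $O(1/k)$ uniformly in the real $c$; consequently $\{x:x(t)\text{ is a dyadic rational}\}$ is an effectively $W$-null set. As $\Phi(\alpha)$ is a complex oscillation, hence a Martin-L\"of random element of $\mathcal R$, it avoids this set, so $\Phi(\alpha)(t)$ has a well-defined binary expansion and the termination claimed in the first paragraph is justified.

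The crux is the second half of the second paragraph --- circumventing the uncomputable stage $M_\alpha$ of (\ref{eq: wlf}) to obtain an $\alpha$-computable modulus for the convergence $p_m\to\Phi(\alpha)$. This is the only place where the $KC$-hypothesis is used essentially, and the only step that is not bookkeeping; I would take it from the construction in \cite{fo:2} rather than re-derive it, and I anticipate no difficulty with the bit-extraction or with the non-dyadicity, which are standard.
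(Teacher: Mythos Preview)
The paper does not actually contain a proof of this theorem: it is simply quoted from \cite{fo:2} with the preface ``In \cite{fo:2} the author proved:''. So there is no argument in the present paper to compare your proposal against.

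Your sketch is therefore strictly more detailed than what the paper offers, and its overall shape is sound: obtain from the effective construction of $\Phi$ an $\alpha$-computable fast Cauchy sequence for $\Phi(\alpha)(t)$, then read off bits, using non-dyadicity of $\Phi(\alpha)(t)$ to guarantee termination. The non-dyadicity argument via Martin-L\"of randomness is correct. That said, you yourself identify the only substantive step --- circumventing the $\alpha$-dependent threshold $M_\alpha$ in (\ref{eq: wlf}) to get an $\alpha$-computable modulus --- and then explicitly defer it back to \cite{fo:2}. Since that is precisely the content of the theorem being cited, your proposal is best read as a roadmap explaining why the result is plausible given the surrounding machinery, rather than an independent proof; a self-contained argument would require reproducing the relevant portion of the construction in \cite{fo:2} that makes the modulus effective in $\alpha$.
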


This result plays a crucial r\^ ole in this paper, for it will enable us to show how the sample path properties of  a complex oscillation  $\Phi(\alpha)$  (and hence of a typical Brownian motion) can be described within the arithmetical hierarchy relative to  the associated $KC$-string $\alpha$. In this way, as was stated in the introduction of this paper,  one finds an explicit unfolding of the incredibly  rich  geometry that is enfolded in every KC-string $\alpha$ by merely regarding such an $\alpha$ as an {\it encoding} of a complex oscillation or, equivalently, of an (effectively) generic Brownian motion.

The mapping $\Phi$ is also a measure-theoretic isomorphism in the following (standard) sense:  Write $\lambda$ for the Lebesgue measure on the space $\{0,1\}^\omega$ and write $W$ for the Wiener measure on $C[0,1]$. Then, for any Borel subset $A$ of $C[0,1]$ with the uniform norm topology, we have
\[\lambda(\Phi^{-1}(A))=W(A).\]
In other words, $W$ is the pushout of $\lambda$ under $\Phi$. We shall frequently denote $\Phi(\alpha)$ by $x_\alpha$.

We follow \cite{fo:1} to define  an analogue of a $\Pi_2^0$ subset of $C[0,1]$ which is of constructive measure $0$. If $F$ is a subset of $C[0,1]$, we denote by $\overline{F}$ its topological closure  in $C[0,1]$ with the uniform norm topology. For $\epsilon > 0$, we let $O_\epsilon(F)$ be the $\epsilon$-ball \(\{f \in C[0,1]:\exists_{g \in F}\|f-g\| < \epsilon\}\) of $f$. (Here $\|.\|$ denotes the supremum norm.) 
We write $F^0$ for the complement of $F$ and $F^1$ for $F$.
\begin{Definition}

A sequence $\mathcal{F}_0=(F_i:i < \omega)$ in $\Sigma$ is an \emph{effective generating sequence} if
\begin{enumerate}

\item for $F \in \mathcal{F}_0$, for $\epsilon > 0$ and $\delta \in \{0,1\}$, we have, for $G=O_\epsilon(F^\delta)$ or for $G=F^\delta$, that $W(\overline{G})=W(G)$,
\item there is an effective procedure that yields, for each sequence \( 0 \leq i_1 < \ldots < i_n < \omega\) and $k < \omega$ a binary rational number $\beta_k$ such that 
\[|W(F_{i_1} \cap \ldots \cap F_{i_n}) - \beta_k| < 2^{-k},\]
\item for $n,i < \omega$, a strictly positive rational number $\epsilon$ and for $x \in C_n$, both the relations $x \in O_\epsilon(F_i)$ and $x \in O_\epsilon(F_i^0)$ are recursive in $x,\epsilon,i$ and $n$, relative to an effective representation of the rationals.

\end{enumerate}
\label{def:effective}

\end{Definition}

If $\mathcal{F}_0 = (F_i:i < \omega)$ is an effective generating sequence and $\mathcal{F}$ is the Boolean algebra generated by $\mathcal{F}_0$, then there is an enumeration $(T_i:i < \omega)$ of the elements of $\mathcal{F}$ (with possible repetition) in such a way, for a given $i$, one can effectively describe $T_i$ as a finite union of sets of the form
\[F = F_{i_1}^{\delta_1} \cap \ldots \cap F_{i_n}^{\delta_n}\]
where $0 \leq i_1 < \ldots < i_n$ and $\delta_i \in \{0,1\}$ for each $i \leq n$. We call any such sequence $(T_i:i < \omega)$ a {\it recursive enumeration} of $\mathcal{F}$. We say in this case that $\mathcal{F}$ is {\it effectively generated} by $\mathcal{F}_0$ and refer to $\mathcal{F}$ as an {\it effectively generated algebra} of sets.

Let $\left(T_{i} : i <\omega\right)$ be a recursive enumeration of the algebra $\mathcal{F}$ which is effectively generated by the sequence $\mathcal{F}_{0}=\left(F_{i}:i<\omega\right)$ in $\Sigma$. It is shown in \cite{fo:1} that there is an effective procedure that yields, for $i,k<\omega$, a binary rational $\beta_{k}$ such that

$$|W\left(T_{i}\right)-\beta_{k}|<2^{-k},$$
in other words, the function $i \mapsto W(T_i)$ is computable.

A sequence $(A_n)$ of sets in $\mathcal{F}$ is said to be $\mathcal{F}$-{\it semirecursive} if it is of the form $(T_{\phi(n)})$ for some total recursive function $\phi: \omega \rightarrow \omega$ and some effective enumeration $(T_i)$ of $\mathcal{F}$. (Note that the sequence $(A_n^c)$, where $A_n^c$ is the complement of $A_n$, is also an $\mathcal{F}$-semirecursive sequence.) In this case, we call the union $\cup_nA_n$ a $\Sigma_1^0(\mathcal{F})$ set. A set is a $\Pi_1^0(\mathcal{F})$-set if it is the complement of a $\Sigma_1^0(\mathcal{F})$-set. It is of the form $\cap_n A_n$ for some $\mathcal{F}$-semirecursive sequence $(A_n)$. A sequence $(B_n)$ in $\mathcal{F}$ is a {\it uniform} sequence of $\Sigma_1^0(\mathcal{F})$- sets if, for some total recursive function $\phi:\omega^2 \rightarrow \omega$ and some effective enumeration $(T_i)$ of $\mathcal{F}$, each $B_n$ is of the form 
\[ B_n = \bigcup_m T_{\phi(n,m)}.\]
In this case, we call the intersection $\cap_nB_n$ a $\Pi_2^0(\mathcal{F})$-set. If, moreover, the  Wiener-measure of $B_n$ converges {\it effectively} to $0$ as $n \rightarrow \infty$, we say that the set given by $\cap_nB_n$ is a $\Pi_2^0(\mathcal{F})$-set of constructive measure $0$. 

The proof of the following theorem appears in \cite{fo:1}.
\begin{Theorem}

Let $\mathcal{F}$ be an effectively generated algebra of sets. If $x$ is a complex oscillation, then $x$ is in the complement of every $\Pi_2^0(\mathcal{F})$-set of constructive measure $0$.
\label{theorem:quoteone}
\end{Theorem}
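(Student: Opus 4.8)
The plan is to argue by contradiction, using the incompressibility clause of Definition~\ref{definition:ap}: I would show that if a complex oscillation $x$ belonged to some $\Pi_2^0(\mathcal{F})$-set $\bigcap_n B_n$ of constructive measure $0$, then the codes $c(x_m)$ of its approximating random walks would be too compressible. This is the exact analogue, transported to the Wiener space $(C[0,1],\Sigma,W)$, of the classical fact that a Kolmogorov--Chaitin random real passes every Martin-L\"of test.

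First I would fix the data. Write $B_n=\bigcup_m T_{\phi(n,m)}$ with $\phi$ recursive and $(T_i)$ a recursive enumeration of $\mathcal{F}$; since ``constructive measure $0$'' means $W(B_n)\to 0$ effectively, passing to a recursive subsequence we may assume $W(B_n)<2^{-n}$, still a uniform sequence of $\Sigma_1^0(\mathcal{F})$-sets. Suppose $x\in\bigcap_n B_n$ is a complex oscillation with complex sequence $(x_m)$, so $x_m\in C_m$, $\|x-x_m\|\to 0$ effectively, and $K(c(x_m))\ge m-d$ for all $m$ and some constant $d$. The engine of the proof is a translation of the Wiener-measure bound on $B_n$ into a counting bound on codes. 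For $a\in\{0,1\}^m$ let $x_a\in C_m$ be the function with code $a$, and let $W_m$ be the push-forward of normalised counting measure on $\{0,1\}^m$ under $a\mapsto x_a$; by the invariance principle $W_m\to W$ weakly. Using clause (3) of Definition~\ref{def:effective} (decidability of $x_a\in O_\epsilon(F_i^\delta)$ for $x_a\in C_m$) together with the computability of $i\mapsto W(T_i)$, I would upgrade this to an \emph{effective} weak convergence on the continuity sets furnished by clause (1), so that for recursive choices of a slowly vanishing radius $\epsilon_n>0$ and a sufficiently rapidly growing level $m(n)$ the set
\[
G_n=\bigl\{\,a\in\{0,1\}^{m(n)}\;:\;x_a\in O_{\epsilon_n}(B_n)\,\bigr\}
\]
is recursively enumerable uniformly in $n$ and satisfies $\#G_n\le 2^{\,m(n)-n+c_0}$ for an absolute constant $c_0$. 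The fattening is unavoidable, since $x_m$ lies near $B_n$ rather than in it, and clause (1) is exactly what keeps the fattening from inflating the measure. On the other side, because $x\in B_n$ and $\|x-x_m\|\to 0$, the approximant $x_{m(n)}$ falls inside $O_{\epsilon_n}(B_n)$ once $m(n)$ outruns the convergence of $x$, so that $c(x_{m(n)})\in G_n$.

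It then remains to run the coding argument. Since $(G_n)$ is uniformly recursively enumerable and the bounds $\#G_n\le 2^{m(n)-n+c_0}$ are computable in $n$, the weights $\sum_n\#G_n\,2^{-(m(n)-n+c_0+2\log n)}\le\sum_n n^{-2}$ converge, so a Kraft--Chaitin construction assigns to each element of $G_n$ a prefix-free description of length at most $m(n)-n+c_0+2\log n+O(1)$, uniformly in $n$; note that this description decodes directly to a string of length $m(n)$, so no $\log m(n)$ term is incurred. Reading off the description of $c(x_{m(n)})$ gives $K(c(x_{m(n)}))\le m(n)-n+O(\log n)$, which against $K(c(x_{m(n)}))\ge m(n)-d$ forces $n\le d+O(\log n)$, false for large $n$. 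Hence no complex oscillation lies in $\bigcap_n B_n$.

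The hard part is the effective measure-to-counting step that produces $G_n$ with the stated cardinality bound. One must make the invariance principle quantitative and uniform in the test index $n$; one must tame the countable-union structure of the $\Sigma_1^0(\mathcal{F})$-set $B_n$ and of its fattening $O_{\epsilon_n}(B_n)$ by means of the ``no mass on topological boundaries'' clause of Definition~\ref{def:effective}; and --- most delicate --- one must calibrate $m(n)$ against $n$ so that $x_{m(n)}$ still lands in the fattened test set, which uses that a complex oscillation admits an approximating complex sequence whose rate of convergence is controlled independently of the particular oscillation (via the L\'evy-type modulus of continuity available for complex oscillations). Everything else --- the passage to a recursive subsequence, the Boolean bookkeeping with the $T_i$, and the Kraft--Chaitin estimate --- is routine.
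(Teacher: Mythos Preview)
The paper does not actually prove this theorem: immediately before the statement it says ``The proof of the following theorem appears in \cite{fo:1}'', and no argument is given here. So there is no in-paper proof against which to compare your proposal directly.

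That said, your outline is the standard strategy and is, in substance, the one carried out in \cite{fo:1}: transport the Cantor-space argument that Kolmogorov--Chaitin reals avoid every Martin-L\"of test to $(C[0,1],W)$ by replacing cylinders with the finite random walks in $C_m$, replacing Lebesgue measure with the push-forward $W_m$ of counting measure on $\{0,1\}^m$, and then compressing $c(x_m)$ via a Kraft--Chaitin request set built from an effective measure bound. You have correctly located the work in the three clauses of Definition~\ref{def:effective}: clause~(1) controls boundary effects under fattening so that $W(O_\epsilon(T))$ stays close to $W(T)$; clause~(2) makes the Wiener measures of the generators computable; clause~(3) makes membership of a walk $x_a\in C_m$ in a fattened generator decidable, so that your sets $G_n$ are uniformly r.e.

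One point in your sketch needs more care. Your fix of a single recursive level $m(n)$ relies on the claim that ``a complex oscillation admits an approximating complex sequence whose rate of convergence is controlled independently of the particular oscillation''. Definition~\ref{definition:ap} does not give this: the modulus of effective convergence there is allowed to depend on $x$. The paper's equation~(\ref{eq: wlf}) does give a uniform eventual rate $C\log m/\sqrt{m}$, but for the auxiliary approximants $p_m$ produced by $\Phi$, not for the complex sequence $(x_m)$ whose codes carry the complexity lower bound, and only beyond an $\alpha$-dependent threshold $M_\alpha$. In \cite{fo:1} this is handled not by assuming a uniform modulus up front but by letting the request set sweep over all sufficiently large levels $m$ for each $n$ with a per-level weight, so that for the given $x$ some level eventually captures $c(x_m)$; the contradiction then only needs $c(x_{m})\in G_n$ for infinitely many $n$, which the $x$-dependent modulus certainly delivers. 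Either reorganising your Kraft--Chaitin step in that way, or first proving a uniform eventual bound on $\|x-x_m\|$ for complex sequences, closes the gap. Everything else in your sketch is sound.
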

This means, that every complex oscillation is, in an obvious sense, $\mathcal{F}$-Martin-L\" of random.
\begin{Definition} 

An effectively generated algebra of sets $\mathcal{F}$   is \emph{universal} if the class $\mathcal{C}$ of complex oscillations is definable  by  a   single $\Sigma_2^0(\mathcal{F})$-set,   the complement of which is a set of constructive measure $0$.
In other words, $\mathcal{F}$ is universal iff a continuous function $x$  on the unit interval is a complex oscillation iff $x$ is $\mathcal{F}$-Martin-L\" of random.
\label{definition:universal}
\end{Definition} 

We introduce  two classes  of effectively generated algebras $\mathcal{G}$ and {$\mathcal{M}$ which are very useful for reflecting properties of one-dimensional Brownian motion into complex oscillations. %As was shown in  \cite{fo:1} and \cite{fo:2}, they are both universal in the sense just stated. 

Let $\mathcal{G}_0$ be a family of sets in $\Sigma$ each having a description of the form:
\begin{equation}
 a_1X(t_1) + \cdots + a_nX(t_n) \leq L
\label{eq:gaussone}
\end{equation}
or of the form (\ref{eq:gaussone}) with $\leq$ replaced by $<$, where all the $a_j,t_j\;(0 \leq t_j \leq 1)$ are non-zero rational numbers, $ L$ is a recursive real number and $X$ is one-dimensional Brownian motion. 

%If  $\epsilon > 0$ and $G \in \Sigma$ is described by (\ref{eq:gaussone}), we have that \( O_\epsilon(G) \) is described by the inequality
%\begin{equation}
% a_1X(t_1) + \cdots +a_nX(t_n) < L + \epsilon\sum_j |a_j| 
%\label{equation:newgaussone}
%\end{equation}
%while $O_\epsilon(G^0)$ is given by
%\begin{equation}
 %a_1X(t_1) + \cdots +a_nX(t_n) > L - \epsilon \sum_j|a_j|. 
%\label{equation:newgausstwo}
%\end{equation}
We require that it be possible to find an enumeration $(G_i:i < \omega)$ of $\mathcal{G}_0$ such that, for given $i$, if $G_i$ is given by (\ref{eq:gaussone}), we can effectively compute the sign, the denominators and numerators of the rational numbers $a_j,t_j$ and, moreover, that the recursive real $L$  can be computed up to arbitrary accuracy. 

%This has the implication that there is an effective procedure, $\Pi$, such that, for given $i, \epsilon, m$ with $i,m < \omega$ and $\epsilon$ a positive rational, the validity of (\ref{equation:newgaussone}) and (\ref{equation:newgausstwo}) can be decided by $\Pi$ when $G_i$ is given by (\ref{eq:gaussone}) and when $X \in C_m$. 

It is shown in \cite{fo:2} that $\mathcal{G}_0=(G_i:i<\omega)$ is an effective generating sequence in the sense of Definition \ref{def:effective}. The associated effectively generated algebra of sets $\mathcal{G}$ will be referred to as a {\it gaussian algebra}. 

It is shown in \cite{fo:1} that if $\mathcal{G}_0 $ is defined by  events of the form  (\ref{eq:gaussone}) with $n=1$ and $a_1=1$,  then the associated $\mathcal {G}$ is in fact universal in the sense of Definition \ref{definition:universal}.

We shall also make frequent use of the following result from \cite{fo:1} which is an easy  consequence of Theorem \ref{theorem:quoteone}. It is the analogue, for continuous functions, of the well-known fact that Kurtz-random reals are in fact Martin-L\" of random.
\begin{Theorem}
If $B$ is a $\Sigma_1^0(\mathcal{F})$ set and $W(B) = 1 $, then $\mathcal{C}$, the set of complex oscillations, is contained in $B$.
\label{theorem:quotetwo}
\end{Theorem}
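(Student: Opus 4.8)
The statement to prove is Theorem \ref{theorem:quotetwo}: if $B$ is a $\Sigma_1^0(\mathcal{F})$ set and $W(B)=1$, then $\mathcal{C} \subseteq B$.

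The plan is to pass to the complement and invoke Theorem \ref{theorem:quoteone}. Write $B = \bigcup_n A_n$ where $(A_n)$ is an $\mathcal{F}$-semirecursive sequence, say $A_n = T_{\phi(n)}$ for a total recursive $\phi$. Then $B^c = \bigcap_n A_n^c$, and since the complement of an $\mathcal{F}$-semirecursive sequence is again $\mathcal{F}$-semirecursive, $B^c$ is a $\Pi_1^0(\mathcal{F})$ set with $W(B^c) = 0$. The goal is to realize $B^c$ as (contained in) a $\Pi_2^0(\mathcal{F})$-set of constructive measure $0$; then Theorem \ref{theorem:quoteone} gives that every complex oscillation lies in its complement, hence in $B$.

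First I would set $C_n = \bigcap_{k \le n} A_k^c$. Each $C_n$ lies in $\mathcal{F}$, and since the map $i \mapsto W(T_i)$ is computable (as recalled in the excerpt) and finite intersections within $\mathcal{F}$ are handled effectively by the recursive enumeration, the sequence $n \mapsto W(C_n)$ is a computable sequence of reals. Moreover $C_n \downarrow B^c$, so $W(C_n) \downarrow W(B^c) = 0$. The one genuine issue is that a computable decreasing sequence of reals converging to $0$ need not converge \emph{effectively} to $0$ — this is the standard obstacle, and it is exactly why the theorem is nontrivial and why it is flagged as the continuous analogue of ``Kurtz random implies Martin-L\"of random.'' To get around it, I would not try to speed up $(C_n)$ directly; instead, for each fixed $m$, search (using the computability of $W(C_n)$) for the least $n = n(m)$ with $W(C_{n}) < 2^{-m}$, which exists since $W(C_n) \to 0$, and note $m \mapsto n(m)$ is a total recursive function. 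Now define $B_m = C_{n(m)}$. Then $(B_m)$ is a uniform sequence of $\Sigma_1^0(\mathcal{F})$-sets (indeed each $B_m \in \mathcal{F}$, so it is trivially a $\Sigma_1^0(\mathcal{F})$-set, and the uniformity comes from composing $\phi$ with the recursive bookkeeping describing finite intersections and with $n(\cdot)$), and $W(B_m) < 2^{-m}$ converges effectively to $0$. Also $\bigcap_m B_m = \bigcap_m C_{n(m)} = \bigcap_n C_n = B^c$ since $n(m) \to \infty$.

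Hence $B^c$ is a $\Pi_2^0(\mathcal{F})$-set of constructive measure $0$. By Theorem \ref{theorem:quoteone}, every complex oscillation $x$ lies in the complement of $B^c$, i.e. $x \in B$. Therefore $\mathcal{C} \subseteq B$, which is what was claimed. The only subtlety to be careful about in writing this up is item (3)-style effectivity bookkeeping: confirming that the sets $C_n$, and the sequence $B_m$, are legitimately presented as ($\mathcal{F}$-semirecursive, resp.\ uniformly $\Sigma_1^0(\mathcal{F})$) in the precise sense of the definitions in this section — but this is routine given that $\mathcal{F}$ comes with a recursive enumeration and a computable measure function, and that $n(m)$ is obtained by an effective unbounded search that is guaranteed to halt. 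I expect no obstacle beyond this, the substantive point being the effective-search trick that converts mere convergence of the measures to zero into constructive convergence.
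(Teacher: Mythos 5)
Your argument is sound and is essentially the reduction the paper has in mind when it calls this an ``easy consequence of Theorem \ref{theorem:quoteone}'' (the paper defers the proof to \cite{fo:1}, so there is no in-text proof to compare against, but passing to $B^c$, exhibiting it inside a $\Pi_2^0(\mathcal{F})$-set of constructive measure $0$, and invoking Theorem \ref{theorem:quoteone} is the intended route). Two small points deserve attention. First, the map sending $m$ to the \emph{least} $n$ with $W(C_n) < 2^{-m}$ is not in general recursive: the relation $W(C_n) < 2^{-m}$ is only $\Sigma_1^0$ in $(n,m)$ (one can certify it when true via the rational approximations, but one cannot in general refute it when $W(C_n)=2^{-m}$), so the least witness cannot be located effectively. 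The fix is the standard one you are clearly gesturing at: dovetail a search over pairs $(n,k)$ for a certificate $\beta_k + 2^{-k} < 2^{-m}$, where $\beta_k$ is the $k$-th binary-rational approximant to $W(C_n)$ provided by the effective measure computation on $\mathcal{F}$, and output the first $n$ so certified; the search halts because some $n$ has $W(C_n) < 2^{-m-1}$. Second, you neither need nor can in general guarantee $n(m)\to\infty$; you invoke it to get $\bigcap_m B_m = B^c$, but the inclusion $\bigcap_m B_m \supseteq B^c$ is immediate (each $B_m = C_{n(m)} \supseteq B^c$) and suffices, since Theorem \ref{theorem:quoteone} then places every complex oscillation in $\left(\bigcap_m B_m\right)^c \subseteq B$. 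With these two adjustments the proof is complete and correct.
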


\section{Diophantine properties of zero sets of Brownian motion and complex oscillations}

The following result is proven in \cite{fomu:1}.
\begin{Theorem}
{\bf (Fouch\'e and Mukeru)}(2013). Let $X$ be a continuous version of one-dimensional Brownian motion on the unit interval. Then, almost surely, there exists a nonzero Radon measure $\mu$ with support on $Z_X$, the zero set of $X$,  such that its Fourier transform $\hat \mu$ satisfies the inequality
\begin{equation}
 \vert \hat \mu(\xi) \vert^2 \ll _\epsilon \vert \xi \vert^{-\frac{1}{2}+\epsilon},
\label{equation:FM}
\end{equation}
as $\vert \xi\vert \rightarrow \infty$. In particular, the zero-set of Brownian motion is a Salem set.
\end{Theorem}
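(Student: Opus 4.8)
The plan is to establish the Fourier-decay estimate \eqref{equation:FM} by realising a measure on $Z_X$ as the (random) pushforward of Lebesgue measure on the zero set, via the local time at level $0$, and then to control the decay of its Fourier transform by a second-moment (Frostman-type) argument applied to the random spectral quantity $|\hat\mu(\xi)|^2$. Concretely, let $L_t$ denote the Brownian local time at $0$, and let $\mu = d L_t$ be the Stieltjes measure it induces; this is a nonzero Radon measure whose (closed) support is exactly $Z_X$ almost surely. The Fourier transform is then
\[
\hat\mu(\xi) = \int_0^1 e^{i\xi t}\, dL_t .
\]
The first step is to record the moment identities for $\hat\mu(\xi)$. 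Using the occupation-time / Kac-type formula and the known joint-density structure of local time (equivalently, the fact that $dL_t$ is a.s. supported on a random Salem-candidate set of dimension $1/2$), one computes $\mathbb{E}\,|\hat\mu(\xi)|^2$ and shows it is $O(|\xi|^{-1/2})$ as $|\xi|\to\infty$. This is the ``expected decay'' step and is essentially a Fourier-side restatement of the classical fact $\dim_h Z_X = \tfrac12$ together with the energy formula \eqref{equation:energy} and Proposition~\ref{proposition:hausdorfcapacity}.

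The second step upgrades the expectation bound to an almost-sure pointwise bound with an arbitrarily small loss $\epsilon$. Here I would compute (or bound) a higher moment, $\mathbb{E}\,|\hat\mu(\xi)|^{2p}$ for a suitable integer $p$, again using the explicit local-time correlation structure, obtaining $\mathbb{E}\,|\hat\mu(\xi)|^{2p} \ll_p |\xi|^{-p/2}$. Then a union bound over a $|\xi|^{2}$-dense net of frequencies in dyadic blocks $|\xi|\asymp 2^k$, combined with an a.s. Lipschitz/modulus-of-continuity estimate for $\xi \mapsto \hat\mu(\xi)$ on each block (which follows from $\mu$ being a finite measure on a bounded interval, so $|\partial_\xi \hat\mu(\xi)| \le \mu([0,1])$), yields by Borel–Cantelli that almost surely $|\hat\mu(\xi)|^2 \ll_\epsilon |\xi|^{-1/2+\epsilon}$ for all large $|\xi|$. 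This is the chaining/Borel–Cantelli step: the $\epsilon$ is exactly the price of passing from a per-frequency estimate to a uniform-in-$\xi$ estimate.

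The final step is bookkeeping: having produced a nonzero $\mu \in M^+(Z_X)$ with $|\hat\mu(\xi)|^2 \ll_\epsilon |\xi|^{-1/2+\epsilon}$, the definition of Fourier dimension gives $\dim_f(Z_X) \ge \tfrac12$, while $\dim_h(Z_X) = \tfrac12$ is classical; combined with the general inequality $\dim_f(E) \le \dim_h(E)$ recorded after \eqref{equation:energy}, we get $\dim_f(Z_X) = \dim_h(Z_X) = \tfrac12$, i.e. $Z_X$ is Salem. I expect the main obstacle to be the second step: obtaining clean higher-moment bounds for $\int_0^1 e^{i\xi t}\,dL_t$ requires a careful handle on the $p$-point correlation functions of Brownian local time at a single level, and one must be slightly careful that the net of frequencies is fine enough relative to the moment exponent $p$ chosen — i.e. balancing $p$ large enough that $p/2$ beats the entropy $2\log_2|\xi|$ of the net while keeping the correlation estimates tractable. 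An alternative to sidestep the local-time computation is to build $\mu$ directly as a weak limit of the normalised occupation measures of $\{t: |X(t)| < \delta\}$ as $\delta \to 0$ and carry the same second-moment method through the approximation; the $\epsilon$-loss is unavoidable either way and is what the $\ll_\epsilon$ in \eqref{equation:FM} encodes.
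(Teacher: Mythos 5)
The paper does not prove this theorem; it is quoted verbatim from the cited reference of Fouch\'e and Mukeru (Statistics and Probability Letters, 2013). Your outline is essentially the strategy of that reference: take $\mu = dL$ for the local time at level $0$ (whose closed support is a.s.\ $Z_X$), bound $\mathbb{E}\,|\hat\mu(\xi)|^{2p}$ via the multi-point occupation-density formula, and pass to an almost-sure uniform bound by a net-plus-Borel--Cantelli argument using $|\partial_\xi\hat\mu(\xi)|\le\mu([0,1])$, with the $\epsilon$-loss absorbed there; the concluding Salem-set step via $\dim_f\le\dim_h=\tfrac12$ is also as in the source. No gaps worth flagging beyond the higher-moment computation you already identify as the technical core.
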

 It would be interesting to study the existence of arithmetic propgressions in the zero sets of $X$. This question is related to the results obtained in Section 8 of \cite{LaPr:1} by \L aba en Pramanik.

By Theorem \ref{theorem:wlf10}, the preceding theorem has the following consequence:
\begin{Theorem}
For a continuous version $X$ of Brownian motion over the unit interval, we have, almost surely,
\[ \mathbb{R} = \bigcup_{n =1}^\infty n(Y_X-Y_X),\]
where
\[Y_X =  Z_X + Z_X + Z_X,\]
and $Z_X$ is the zero set of $X$. Moreover, almost surely,  for any finite set $A$ of real numbers, the set $Y_X$ will contain an affine (rescaled and translated) copy of $A$.
\end{Theorem}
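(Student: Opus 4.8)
The plan is to apply Theorem~\ref{theorem:wlf10} directly with $E = Z_X$, using the Fouch\'e--Mukeru estimate (\ref{equation:FM}) as the hypothesis on the Fourier decay of a measure on $E$. First I would fix a sample path $X$ lying in the (probability-one) event furnished by the Fouch\'e--Mukeru theorem, so that there exists a nonzero Radon measure $\mu \in M_+(Z_X)$ with $|\hat\mu(\xi)|^2 \ll_\epsilon |\xi|^{-1/2+\epsilon}$ as $|\xi|\to\infty$. Rewriting this, for every $\epsilon>0$ there is a constant $C=C(\epsilon)$ with $|\hat\mu(\xi)|^2 \le C|\xi|^{-\alpha+\epsilon}$ where $\alpha = 1/2$; this is exactly the hypothesis of Theorem~\ref{theorem:wlf10} with the single fixed value $\alpha=1/2$ (the theorem allows $\mu$ and $\alpha$ to depend on $\epsilon$, but here one $\mu$ works uniformly, which is stronger than needed).

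Next I would choose the integer $k$. We need $k\alpha>1$, i.e.\ $k/2>1$, so $k=3$ is the smallest admissible choice — this is precisely why the statement takes $Y_X = Z_X+Z_X+Z_X$. Applying the conclusion of Theorem~\ref{theorem:wlf10} with $E=Z_X$ and $k=3$ gives $E_3 = Z_X+Z_X+Z_X = Y_X$ and the identity $\mathbb{R} = \bigcup_{n<\omega} n(E_3-E_3) = \bigcup_{n=1}^\infty n(Y_X-Y_X)$ (the $n=0$ term is harmless). The ``moreover'' clause — that $Y_X$ contains an affine copy of every finite set $A\subseteq\mathbb{R}$ — is likewise immediate from the corresponding ``moreover'' clause of Theorem~\ref{theorem:wlf10}, which itself rests on the \L aba--Pramanik observation via Lebesgue's density theorem applied to the positive-measure set $E_3$.

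Finally I would note that both conclusions hold on the same probability-one event, namely the event on which the Fouch\'e--Mukeru measure $\mu$ exists; since everything downstream is deterministic given $\mu$, no further measure-theoretic bookkeeping is required, and the ``almost surely'' in the statement is inherited verbatim. In short, the proof is a two-line invocation: (\ref{equation:FM}) supplies the hypothesis of Theorem~\ref{theorem:wlf10}, and $k=3$ is forced by $\alpha=1/2$.

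There is essentially no obstacle here, since the work has been front-loaded into Theorem~\ref{theorem:wlf10} and the Fouch\'e--Mukeru theorem. The only point that deserves a moment's care is the quantifier structure: Theorem~\ref{theorem:wlf10} is phrased so that $\alpha$ may vary with $\epsilon$, whereas (\ref{equation:FM}) gives a fixed $\alpha=1/2$ with the $\epsilon$ absorbed into the exponent and the constant; one should check that the fixed-$\alpha$, $\epsilon$-in-the-exponent formulation indeed matches (it does, and is in fact the form in which Theorem~\ref{theorem:wlf10}'s proof uses the hypothesis, since there one picks $\epsilon$ small enough that $k(\alpha-\epsilon)>1+\epsilon$). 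No other subtlety arises.
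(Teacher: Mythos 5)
Your proposal is correct and is essentially identical to the paper's proof: the paper simply states that the theorem follows from Theorem~\ref{theorem:wlf10} applied to the Fouch\'e--Mukeru estimate~(\ref{equation:FM}), which is exactly the invocation you carry out, with $\alpha=1/2$ forcing $k=3$. Your side remark about the quantifier structure in the hypothesis of Theorem~\ref{theorem:wlf10} (fixed $\alpha$ with $\epsilon$ only in the exponent, as opposed to $\alpha$ varying with $\epsilon$) correctly identifies how that hypothesis is actually consumed in its proof, so no gap remains.
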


We now investigate the extent to which this result can be reflected in every complex oscillation.
For a fixed $r \in \mathbb{R}$ define the subset $\Omega_r$ of $C[0,1]$ by:
\[X \in \Omega_r \leftrightarrow \exists_n \exists_{z_1,\ldots,z_6 \in Z_X}\; \big[r=n((z_1+z_2+z_3)-(z_4+z_5+z_6))\big].\]
It follows from the preceding that each $\Omega_r$ has Wiener measure one.
For a real $r$ and a natural number $\ell$, let $I_{r,\ell}$ be any interval of length $\leq \frac{1}{\ell}$ with rational endpoints which contains $r$.
%1 We also construct the intervals in such a way, that, if $r$ is nonzero, the distance from the intervals to $0$ is uniformly bounded from below..

For a real $r$, a continuous function $X$ on the unit interval and an natural number $\ell$ define the predicate $P(r,\ell,X)$ by:
\[ P(r,\ell,X) \leftrightarrow \exists_n \exists_{t_1, \ldots, t_6 \in [0,1] \cap \mathbb{Q}}\; \big[n((t_1+t_2+t_3)-(t_4+t_5+t_6)) \in I_{r, \ell}\big] \wedge \forall_{1 \leq i \leq 6} |X(t_i)| < \frac{1}{\ell}.\]
Note that for fixed $r$ and $\ell$ the predicate $P(r,\ell,X)$ is $\Sigma_1^0(\mathcal{G})$ for some (fixed) gaussian algebra $\mathcal{G}$. 

Our next aim is to show, for nonzero $r$:
\[ X \in \Omega_r \rightarrow \forall_\ell P(r,\ell,X).\]
This will have the implication that for fixed $r,\ell$, the predicate $P$ defines a $\Sigma_1^0(\mathcal{G})$--set of Wiener measure one so that in particular $P(r,\ell,x)$ will also hold for each complex oscillation $x$.

For $X \in \Omega_r$ and $\ell \geq 1$ let $n$ be a natural number and $z_1, \ldots,z_6$ be zeroes of $X$ such that $r=n((z_1+z_2+z_3)-(z_4+z_5+z_6))$. Next choose $t_1, \dots, t_6 \in [0,1] \cap \mathbb{Q}$ sufficiently close to $z_1, \ldots,z_6$ to ensure that both $|n((t_1+t_2+t_3)-(t_4+t_5+t_6))-r|<\frac{1}{\ell}$ and $|X(t_i)| < \frac{1}{\ell}$ for $i=1,\ldots,6$ holds. Consequently, we can deduce $P(r, \ell,X)$ for all $\ell$.

We have proven
\begin{Theorem}
If $x$ is a complex oscillation and $r$ is a real number then
 \[ \forall_\ell\exists_n \exists_{t_1, \ldots, t_6 \in [0,1] \cap \mathbb{Q}}\; \big[|n((t_1+t_2+t_3)-(t_4+t_5+t_6))-r|< \frac{1}{\ell}\big ] \wedge \forall_{1 \leq i \leq 6} |x(t_i)| < \frac{1}{\ell}.\]
\label{theorem:final}
\end{Theorem}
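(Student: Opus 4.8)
The plan is to reduce Theorem \ref{theorem:final} to the scheme already laid out in the preceding discussion, so that essentially no new analytic work is required; all the substance lives in the Fourier-analytic Theorem of Fouch\'e and Mukeru together with the folklore Theorem \ref{theorem:wlf10}, and the role of the final theorem is to transfer the resulting almost-sure statement into a statement about every complex oscillation via the gaussian-algebra machinery of Section \ref{section:co}. Concretely, I would first observe that for each fixed nonzero real $r$ and each fixed natural number $\ell$, the predicate $P(r,\ell,X)$ is $\Sigma_1^0(\mathcal{G})$ for a suitable gaussian algebra $\mathcal{G}$: the outer quantifiers $\exists_n \exists_{t_1,\dots,t_6 \in [0,1]\cap\mathbb{Q}}$ range over a recursively enumerable set, the arithmetic condition $|n((t_1+t_2+t_3)-(t_4+t_5+t_6))-r|<\tfrac1\ell$ is a decidable condition on those parameters (using a rational interval $I_{r,\ell}\ni r$), and for each fixed admissible tuple the event $\bigwedge_{i=1}^6 |X(t_i)|<\tfrac1\ell$ is a finite intersection of basic gaussian events of the form (\ref{eq:gaussone}); a countable union of such events over all admissible tuples is by definition a $\Sigma_1^0(\mathcal{G})$ set.

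Next I would establish the containment $\Omega_r \subseteq \{X : \forall_\ell\, P(r,\ell,X)\}$, which is the one genuinely non-trivial set-theoretic step, though it is elementary: given $X\in\Omega_r$, fix witnesses $n$ and zeros $z_1,\dots,z_6 \in Z_X$ with $r=n((z_1+z_2+z_3)-(z_4+z_5+z_6))$; by continuity of $X$ and density of the rationals, for any $\ell$ one can pick rationals $t_i$ so close to $z_i$ that simultaneously $|X(t_i)|<\tfrac1\ell$ (since $X(z_i)=0$) and the linear combination $n((t_1+t_2+t_3)-(t_4+t_5+t_6))$ lies within $\tfrac1\ell$ of $r$ (since it is a fixed continuous, indeed affine, function of the $t_i$). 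Combined with the earlier observation that $W(\Omega_r)=1$ for every $r$ (itself a consequence of the Fouch\'e--Mukeru theorem via Theorem \ref{theorem:wlf10} applied to $E=Z_X$, $k=3$), this shows that for each fixed $r$ and $\ell$ the $\Sigma_1^0(\mathcal{G})$ set $\{X:P(r,\ell,X)\}$ has Wiener measure one.

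Then I would invoke Theorem \ref{theorem:quotetwo}: any $\Sigma_1^0(\mathcal{F})$ set of Wiener measure one contains the whole class $\mathcal{C}$ of complex oscillations. Hence every complex oscillation $x$ satisfies $P(r,\ell,x)$ for that fixed $r$ and $\ell$. Since $r$ and $\ell$ were arbitrary, and since $P(r,\ell,x)$ unwinds precisely to
\[
\exists_n \exists_{t_1,\dots,t_6\in[0,1]\cap\mathbb{Q}}\ \bigl[|n((t_1+t_2+t_3)-(t_4+t_5+t_6))-r|<\tfrac1\ell\bigr]\wedge\forall_{1\le i\le 6}|x(t_i)|<\tfrac1\ell,
\]
quantifying over all $\ell$ gives exactly the statement of Theorem \ref{theorem:final}. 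For the case $r=0$ the statement is trivial (take all $t_i$ equal to a rational near a zero of $x$, or note $x(0)=0$), so the nonzero case is the only one needing the argument above.

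The main obstacle, such as it is, is purely bookkeeping: one must check carefully that the set of $(n,t_1,\dots,t_6)$ for which the displayed arithmetic inequality holds is uniformly recursive in $r$ (via an effective choice of the rational interval $I_{r,\ell}$) and $\ell$, and that the resulting union of basic gaussian events genuinely falls within a single fixed gaussian algebra $\mathcal{G}$ — i.e. that only finitely many ``shapes'' of inequality (\ref{eq:gaussone}) with rational data are needed, or at least that they can all be enumerated effectively in the sense required by Definition \ref{def:effective}. There is no analytic difficulty remaining once the Fouch\'e--Mukeru Salem-set result and Theorem \ref{theorem:wlf10} are granted; the content of the present theorem is the \emph{recursive} reflection, and Theorem \ref{theorem:quotetwo} is exactly the tool that performs it.
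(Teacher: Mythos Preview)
Your proposal is correct and follows essentially the same route as the paper: show $P(r,\ell,\cdot)$ is $\Sigma_1^0(\mathcal{G})$, prove $\Omega_r\subseteq\{X:\forall_\ell P(r,\ell,X)\}$ by approximating the zeros $z_i$ by rationals $t_i$ using continuity, deduce that each $\{X:P(r,\ell,X)\}$ has full Wiener measure from $W(\Omega_r)=1$, and then apply Theorem~\ref{theorem:quotetwo} to conclude for every complex oscillation. Your explicit handling of the trivial case $r=0$ and your remarks on the recursive bookkeeping for $I_{r,\ell}$ are minor additions, but the argument is the paper's own.
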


Denote the predicate in Theorem  \ref{theorem:final} by $P(x,r)$. It follows that the set defined $B$ by
\[x \in B \leftrightarrow \forall_r P(x,r) \]
contains all the complex oscillations. Define $Q(x,r)$ as $P(x,r)$ but with the first two quantifiers interchanged. Then 
\[ \mathbb{R} = \bigcup_{n < \omega} n(Y_x-Y_x) \leftrightarrow \forall_r Q(x,r).\]
It is an open problem whether the predicate $\forall_rQ(x,r)$ defines a set that will contain all complex oscillations.

\section{Hamel sets generated by complex oscillations}
\label{section:hamel}
For the historical background to and a Fourier-analytical perspective on the results of this section, the reader is referred to Chapter 5 of the book by Rudin \cite{rud:1}.

A perfect subset of the unit interval is called a {\em Hamel set}, if its elements are linearly independent over the field of rational numbers, or, equivalently, if it is a perfect subset of some Hamel basis of the reals over the rationals. Our aim is to show how Hamel sets can be generated by complex oscillations. Our results are inspired by the arguments on pp 255-257 of Kahane \cite{kah:1}.

Set
\begin{equation}
E= \left\lbrace\frac{1}{2}+\sum_{k=2}^\infty\epsilon_k\frac{1}{2^{k^2}}:\epsilon_k \in\{-1,1\}\;\mbox{for}\;\mbox{all}\;k \right\rbrace.
\label{eq:thinset}
\end{equation}
In \cite{fo:5}, the author proved:
\begin{Theorem}

If $x$ is a complex oscillation then the elements of the image $x(E)$ of the set $E$ under $x$ will be linearly independent over the field of rational numbers.
\label{th:Hamel}
\end{Theorem}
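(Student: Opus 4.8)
The plan is to show that any nontrivial rational linear relation among finitely many points of $x(E)$ would force $x$ to lie in a $\Pi_2^0(\mathcal{G})$-set of constructive measure zero, contradicting Theorem \ref{theorem:quoteone}. So I would first set up the target event. Fix distinct sequences $\epsilon^{(1)},\ldots,\epsilon^{(m)}\in\{-1,1\}^{\mathbb{N}\setminus\{1\}}$ producing distinct points $s_1,\ldots,s_m\in E$, and fix rationals $q_1,\ldots,q_m$, not all zero, with (say) $\sum q_j = 0$ so that the constant $1/2$ drops out. The relation $\sum_j q_j x(s_j)=0$ is what we must exclude. The key structural feature of $E$ (dimension zero, "ultra-thin") is that the points $s_j$, though not rational, are effectively approximable: truncating the series $\frac12+\sum_{k\geq 2}\epsilon_k 2^{-k^2}$ at level $N$ gives a rational $s_j^{(N)}$ with $|s_j-s_j^{(N)}|\leq 2^{-(N+1)^2+1}$, and since $x=x_\alpha$ is computed from $\alpha$ by Theorem \ref{theorem:isomorphism}, the values $x(s_j)$ are themselves effectively approximable via the (Lévy) modulus of continuity of a complex oscillation. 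I would record the modulus of continuity estimate for complex oscillations (standard, and available via the approximation \eqref{eq: wlf}), namely $|x(t)-x(t')| = O(\sqrt{|t-t'|\log(1/|t-t'|)})$, so that the truncation error in $x(s_j)$ is controlled.

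Next I would build the null set. For each fixed tuple of rationals $(q_j)$ and each "prefix pattern" specifying the first $N$ signs of each $\epsilon^{(j)}$ together with the requirement that the patterns be pairwise distinct, consider the set of continuous functions $f$ for which $|\sum_j q_j f(s_j^{(N)})| < \delta_N$, where $s_j^{(N)}$ is the rational level-$N$ truncation and $\delta_N$ is chosen to dominate both the truncation error and a slack term tending to $0$. Each such condition is a $\Sigma_1^0(\mathcal{G})$ event of the form \eqref{eq:gaussone} (a rational linear combination of values of $X$ at rationals, bounded), and intersecting over the relevant patterns and taking the union over $(q_j)$ and over which indices appear, one gets, for each $N$, a $\Sigma_1^0(\mathcal{G})$ set $B_N$; the set of $f$ admitting \emph{some} nontrivial rational relation among points of $f(E)$ is contained in $\bigcap_N B_N$. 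The crucial estimate is that $W(B_N)\to 0$ effectively. This is where the Gaussian computation enters: for a \emph{single} fixed relation with fixed distinct points, the random variable $\sum_j q_j X(s_j)$ is a nondegenerate centered Gaussian (nondegeneracy because the $s_j$ are distinct and the covariance $\min(s,t)$ of Brownian motion makes any finite Gram matrix of distinct times nonsingular), so $W(|\sum_j q_j X(s_j^{(N)})|<\delta_N) = O(\delta_N/\sigma_N)$, and one must check that as $N\to\infty$ these variances $\sigma_N$ stay bounded below (or decay slowly enough) uniformly over the finitely-many-at-each-level patterns, so the total measure of $B_N$ is summably/effectively small. A counting bound on the number of admissible patterns at level $N$ against the Gaussian small-ball probability gives effective convergence to $0$.

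The main obstacle is precisely this uniform small-ball / nondegeneracy estimate: one is not excluding one Gaussian linear form but a whole family indexed by the combinatorial data, and one needs the variance of $\sum_j q_j X(t_j)$ to be bounded away from $0$ uniformly as the $t_j$ range over distinct level-$N$ truncations of $E$-points — equivalently a quantitative lower bound on the smallest eigenvalue of the Brownian covariance matrix restricted to such configurations — and this lower bound, against the (only polynomially or exponentially large) number of configurations and the denominators of the $q_j$, must beat the slack $\delta_N$. Once that estimate is in hand, $\bigcap_N B_N$ is a $\Pi_2^0(\mathcal{G})$-set of constructive measure zero, Theorem \ref{theorem:quoteone} puts every complex oscillation in its complement, and hence no nontrivial rational relation holds among the points of $x(E)$; since $x$ is injective on $E$ (by the same argument applied to the relation $x(s_i)-x(s_j)=0$, or directly), $x(E)$ is perfect, so it is a Hamel set. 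I would remark that effectivity of the convergence is automatic here because all the quantities ($\delta_N$, the pattern counts, the variance lower bounds) are given by explicit recursive expressions in $N$.
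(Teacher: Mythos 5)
Your overall strategy --- exclude a putative rational relation $\sum_j q_j x(s_j)=0$ on $E$ by showing it forces $x$ into a $\Pi_2^0(\mathcal{G})$-set of constructive measure zero, then invoke Theorem~\ref{theorem:quoteone} --- is the natural one in this framework and matches the spirit of the cited source. But the quantitative heart of the argument does not go through as you state it. The claim that the variance of $\sum_j q_j X(t_j)$ is bounded away from zero uniformly over distinct level-$N$ truncations of $E$-points is false once $\sum_j q_j=0$: after sorting $t_1<\dots<t_m$ and setting $Q_k=q_k+\dots+q_m$, the variance is $\sum_k Q_k^2(t_k-t_{k-1})$, and this can be as small as $q_m^2\,2^{-N^2}$ when the two largest $t_j$ sit in adjacent level-$N$ blocks of $E$ (gap $\asymp 2^{-N^2}$). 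The Gaussian small-ball probability for such a tuple is then $\asymp \delta_N\,2^{N^2/2}$, and the modulus of continuity forces $\delta_N\asymp (N+1)2^{-(N+1)^2/2}=(N+1)2^{-N^2/2-N-1/2}$, so each such tuple contributes $\asymp (N+1)2^{-N}$ --- but there are $\asymp 2^{(m-1)N}$ admissible $m$-tuples of distinct level-$N$ prefixes. The union bound gives $W(B_N)\gtrsim (N+1)2^{(m-2)N}$, which does not tend to $0$ for $m\ge 2$ (for $m=2$ it merely stays bounded). So the $B_N$ you describe is not a Martin-L\"of test, and the closing remark that ``effectivity is automatic'' is premature: the quantities are recursive in $N$, but they do not converge.

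What is missing is a stratification by the combinatorial depth of the configuration: classify $m$-tuples by the maximal level $K$ at which two of the chosen $E$-branches still agree (equivalently by the minimum gap $\asymp 2^{-K^2}$), treat configurations split at level $K$ only at truncation levels $L$ with $K \le L-m$, where the slack $2^{-(L+1)^2/2}$ genuinely beats both the small-ball factor $2^{K^2/2}$ and the pattern count $2^{mL}$, and then fold everything into a single uniform test, e.g.\ by passing from $B_L$ to $\bigcup_{L'\ge L} B_{L'}$ and checking $\sum_L W(B_L)$ is effectively summable. One must also separately dispose of the easy case $\sum_j q_j\neq 0$ (variance bounded below by $(\sum q_j)^2/4$). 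Without this bookkeeping --- or a genuinely different route through the Fourier-analytic Kronecker-set arguments of Kahane that the paper points to as its inspiration --- the proposal has a gap at exactly the place you flagged as the main obstacle.
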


Our next aim is to show how one can use this theorem together with Theorem \ref {theorem:isomorphism} to find definitions of Hamel sets within the arithmetical hierarchy.
For $\ell \geq 2$, set
\[D_\ell= \left\lbrace\frac{1}{2}+\sum_{k=2}^\ell \epsilon_k\frac{1}{2^{k^2}}:\epsilon_k \in\{-1,1\}\;\mbox{for}\;\mbox{all}\;k=2, \ldots,\ell \right\rbrace.\]
Write $D=\cup_{\ell \geq 2} D_\ell$. Note that the topological closure of $D$ is $D \cup E$ 
We begin by proving
\begin{Proposition}

If $\alpha \in KC$, then 

\begin{equation}
 z \in x_\alpha(E) \leftrightarrow \forall_m \exists_{n>m} \exists_{t \in D_n} |x_\alpha(t)-z| < \frac{1}{2^n}.
\label{equation:predicateone}
\end{equation}
\end{Proposition}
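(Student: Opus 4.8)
The plan is to prove the biconditional (\ref{equation:predicateone}) by exploiting the fact that $E$ is the set of limit points of the nested finite sets $D_\ell$, together with the continuity of $x_\alpha$ and the effective approximation guaranteed by Theorem~\ref{theorem:isomorphism}. First I would record the geometric structure of $D = \bigcup_{\ell\ge 2} D_\ell$: every point $t \in D_n$ has precisely two ``children'' in $D_{n+1}$, obtained by appending $\epsilon_{n+1} = \pm 1$, and these lie within $\sum_{k>n} 2^{-k^2} < 2^{-(n+1)^2 + 1} \ll 2^{-n}$ of $t$; hence each point of $D_n$ is within this tiny radius of a genuine point of $E$ (follow the branch), and conversely each point of $E$ is within the same radius of a point of $D_n$ (truncate the expansion at $k = n$). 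This is the only ``combinatorial'' input and it is entirely routine.

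For the forward direction, suppose $z = x_\alpha(s)$ with $s \in E$. Given $m$, pick any $n > m$ large enough that the modulus of continuity of $x_\alpha$ controls the oscillation over intervals of length $2^{-(n+1)^2+1}$; then let $t \in D_n$ be the truncation of $s$, so $|s - t| < 2^{-n}$ roughly and, since $x_\alpha$ is (uniformly) continuous, $|x_\alpha(t) - x_\alpha(s)| < 2^{-n}$ for $n$ large — absorbing constants by increasing $n$ if necessary. That gives the right-hand side. For the converse, suppose the right-hand side holds: there is a sequence $n_j \to \infty$ and points $t_j \in D_{n_j}$ with $|x_\alpha(t_j) - z| < 2^{-n_j} \to 0$. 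By compactness of $[0,1]$ pass to a subsequence with $t_j \to s$; since each $t_j$ lies within $2^{-n_j}$ of $E$ and $E$ is closed, $s \in E$. By continuity $x_\alpha(t_j) \to x_\alpha(s)$, and since also $x_\alpha(t_j) \to z$ we get $z = x_\alpha(s) \in x_\alpha(E)$.

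The one point that needs genuine care — and which is really the reason the statement is being made at all rather than just asserted — is that the modulus of continuity used in the forward direction must be handled so that the quantifier bounds ``$\exists_{n>m}$'' and the error ``$< 2^{-n}$'' are compatible; here I would invoke the uniform continuity of $x_\alpha$ on $[0,1]$ (every element of $C_0[0,1]$ is uniformly continuous) and simply choose $n$ large, noting that the displayed equivalence only asserts existence of \emph{some} $n > m$, not a bound computable from $m$. I expect this to be the main (though still mild) obstacle: making sure that no \emph{effectivity} of the modulus is secretly required at this stage. In fact it is not — the Proposition is a pointwise topological equivalence, valid for \emph{any} continuous $x_\alpha$; the role of $\alpha \in KC$ and of Theorem~\ref{theorem:isomorphism} is not needed for the truth of (\ref{equation:predicateone}) itself but only later, to conclude that the right-hand side, as a predicate in $(\alpha, z)$, sits at a definite level of the arithmetical hierarchy relative to $\alpha$ (the matrix ``$\exists_{t\in D_n}\,|x_\alpha(t) - z| < 2^{-n}$'' being decidable relative to $\alpha$ and to a name for $z$, because $D_n$ is finite and $x_\alpha(t)$ is computable from $\alpha$ and rational $t$). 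I would state the proof for general continuous $x_\alpha$ and then remark on this hierarchy consequence.
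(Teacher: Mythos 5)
Your treatment of the converse direction (compactness, subsequence, $E$ closed, continuity of $x_\alpha$) is correct and is essentially the argument in the paper. But the forward direction has a genuine gap, and it is located precisely at the point you flag as ``mild'' and then dismiss.

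Uniform continuity of $x_\alpha$ gives you, for each target error $\epsilon = 2^{-n}$, \emph{some} $\delta_n>0$ with the property that $|s-t|<\delta_n$ implies $|x_\alpha(s)-x_\alpha(t)|<2^{-n}$; it gives you no control on how fast $\delta_n$ may shrink. What you actually need is that the truncation $t_n\in D_n$ of $s$, which satisfies $|s-t_n|\lesssim 2^{-(n+1)^2}$, already lies within $\delta_n$ of $s$ --- that is, you need $\omega_{x_\alpha}\bigl(2^{-(n+1)^2}\bigr)<2^{-n}$ for infinitely many $n$, where $\omega_{x_\alpha}$ is the modulus of continuity. This is a quantitative lower bound on the quality of the modulus, not a consequence of mere continuity: it fails, for example, for a continuous function whose modulus near $s$ behaves like $1/\log\log(1/\delta)$, since then $\omega\bigl(2^{-(n+1)^2}\bigr)\asymp 1/\log n$, which dominates $2^{-n}$ for every $n$. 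The fact that $x_\alpha(t_n)\to z$ never forces $|x_\alpha(t_n)-z|<2^{-n}$ along a subsequence, exactly as $1/n\to 0$ never drops below $2^{-n}$. So your parenthetical claim that ``the Proposition is a pointwise topological equivalence, valid for any continuous $x_\alpha$'' is false.

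This is where the hypothesis $\alpha\in KC$ is doing real work, and it is where the paper's proof differs from yours. The paper invokes Proposition~1 of \cite{fo:2}, which furnishes a H\"older-$\tfrac12$-type modulus for complex oscillations: for some constant $C>1$ and $n$ sufficiently large,
\[
  |x_\alpha(t_n)-z| \le C\,|t_n-t|^{1/2}\log\frac{1}{|t_n-t|}.
\]
With $|t_n-t|\le 2^{-n^2}$ this yields $|x_\alpha(t_n)-z|\lesssim n^2\,2^{-n^2/2}$, which is $o(2^{-n})$, and the forward implication follows. In short: the super-exponential gaps $2^{-k^2}$ in $E$ were chosen to beat the $|h|^{1/2}\log(1/|h|)$ modulus of a complex oscillation, not the modulus of an arbitrary continuous function, and your proof needs to cite (or reprove) that modulus estimate rather than appeal to uniform continuity.
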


Proof: Suppose $z = x_\alpha(t)$ where $\alpha \in KC$ and  $t \in E$ is given by
\[t = \frac{1}{2} + \sum_{k=2}^ \infty \epsilon_k \frac{1}{2^{k^2}}.\]
For $n \geq 1$ set
\[t_n = \frac{1}{2} + \sum_{k=2}^ n\epsilon_k \frac{1}{2^{k^2}}.\]
It follows from Proposition 1 in \cite{fo:2} that for some constant  $C >1$ and $n$ sufficiently large it is the case 
\[ |x_\alpha(t_n)-z| \leq C|t_n-t|^{\frac{1}{2}}\log\frac{1}{|t_n-t|}.\]
Since $|t_n-t| \leq \frac{1}{2^{n^2}}$ we conlude that for all n sufficiently large
\[|x_\alpha(t_n)-z| < \frac{1}{2^n}.\]
Conversely, suppose that $z, \alpha$  satisfy the predicate on the right-hand side of  (\ref{equation:predicateone}). With each $m$, we associate an $n=n(m) >m$ such that
$ |x_\alpha(t_n)-z| < \frac{1}{2^n}$ for some $t_n \in D_n$. The sequence $(t(m))$ has some convergent sequence with a limit $\tau$ say. Clearly $\tau \in E$, and by the continuity of $x_\alpha$, we can conclude that $x_\alpha(\tau)=z$. This concludes the proof of the Proposition.

Note that 
\[ |x_\alpha(t)-z| < \frac{1}{2^n} \leftrightarrow \exists_k |\overline{x_\alpha(t)}(k) - \overline{z(k)}| < \frac{1}{2^n} - \frac{2}{2^k}, \]
the right-hand side being  $\Sigma_1^0$ in $\alpha,z,t$ and $n$.  Consequently
\begin{Theorem}
There is a $\Pi_2^0$-formula $Q(\alpha,z)$ defined over $KC \times \{0,1\}^\omega$ such that
\[ z \in x_\alpha(E) \leftrightarrow Q(\alpha,z).\]
\end{Theorem}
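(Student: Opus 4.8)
The plan is to combine the Proposition just established with the observation about $\Sigma_1^0$ approximability that immediately precedes the statement, so the theorem is essentially a bookkeeping exercise in descriptive complexity. First I would recall that the Proposition gives the equivalence
\[ z \in x_\alpha(E) \leftrightarrow \forall_m \exists_{n>m} \exists_{t \in D_n} |x_\alpha(t)-z| < \frac{1}{2^n}, \]
so it suffices to check that the matrix of this formula, namely $\exists_{t \in D_n} |x_\alpha(t)-z| < 2^{-n}$, is $\Sigma_1^0$ (uniformly in $\alpha, z, m, n$) and that the two leading quantifiers arrange themselves into the $\Pi_2^0$ pattern.

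Next I would unpack the inner matrix. For fixed $n$, the set $D_n$ is finite and effectively enumerable — each $t \in D_n$ is a rational of the form $\tfrac12 + \sum_{k=2}^n \epsilon_k 2^{-k^2}$ specified by a sign vector $(\epsilon_2,\dots,\epsilon_n)$, so $\exists_{t \in D_n}$ is a bounded (finite) quantifier and contributes nothing to the arithmetical complexity beyond a finite disjunction. The remaining point is the predicate $|x_\alpha(t)-z| < 2^{-n}$. Here I invoke Theorem \ref{theorem:isomorphism}: relative to the $KC$-string $\alpha$ there is a uniform algorithm computing, for a rational $t$ and any precision, an approximation to $x_\alpha(t)$; together with the trivial fact that $z$ is given as an element of $\{0,1\}^\omega$ (so its dyadic approximations $\overline{z}(k)$ are available), the displayed equivalence
\[ |x_\alpha(t)-z| < \frac{1}{2^n} \leftrightarrow \exists_k \, \big| \overline{x_\alpha(t)}(k) - \overline{z}(k) \big| < \frac{1}{2^n} - \frac{2}{2^k} \]
shows the predicate is $\Sigma_1^0$ in $\alpha, z, t, n$, as already noted in the text just above the statement. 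Absorbing the existential $\exists_k$ and the finite $\exists_{t\in D_n}$ into the single existential block $\exists_n$ (using pairing of quantifiers of the same type), the right-hand side of the Proposition's equivalence becomes $\forall_m \exists_{(n,k,t)} R$ with $R$ recursive in $\alpha, z$, which is the required $\Pi_2^0$ form. Calling this formula $Q(\alpha,z)$ completes the argument.

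I do not expect a serious obstacle here; the only point deserving care is that all of this must be uniform in $\alpha$ ranging over $KC$ — i.e. $Q$ is a genuine arithmetical formula whose only non-logical ingredient is an oracle call to $\alpha$ — and this is exactly what the uniformity clause of Theorem \ref{theorem:isomorphism} delivers. A secondary small point is that the convergence estimate from Proposition 1 of \cite{fo:2} used in the proof of the preceding Proposition already guarantees the forward implication quantitatively, so no extra effectivity is needed there. Thus the theorem follows by assembling Theorem \ref{theorem:isomorphism}, the Proposition, and the stated $\Sigma_1^0$-approximation remark.
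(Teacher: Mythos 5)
Your proof is correct and follows essentially the same route as the paper: start from the Proposition's equivalence, observe that the inner predicate $|x_\alpha(t)-z| < 2^{-n}$ is $\Sigma_1^0$ in $\alpha, z, t, n$ via the displayed $\exists_k$ unfolding (justified by Theorem \ref{theorem:isomorphism}), note that $\exists_{t \in D_n}$ is a finite, effectively enumerable disjunction, and collapse the existential block to obtain a $\Pi_2^0$ formula. The paper's own argument is just a terser version of this same bookkeeping.
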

Let $\Omega$ be any $\Delta_2^0$-element of $KC$ (a Chaitin real). Then $Q(\lambda i\Omega(i), z)$ is a $\Pi_3^0$-predicate in $z$ that defines a Hamel set.
We have proven
\begin{Theorem}
 There is a $\Pi_3^0$-predicate $R(z)$ over $\{0,1\}^\omega$ and a Hamel set $K$ such that
\[R(z) \leftrightarrow z \in K\].
\end{Theorem}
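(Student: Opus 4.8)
The plan is to assemble the final statement by combining the two results immediately preceding it in Section \ref{section:hamel}, so the proof is essentially a matter of organising quantifiers and invoking known facts. First I would fix a specific $\Delta_2^0$ member $\Omega$ of $KC$; that such an $\Omega$ exists is classical — Chaitin's halting probability $\Omega$ is left-c.e., hence $\Delta_2^0$, and it is Martin-L\"of random, hence a $KC$-string in the sense of (\ref{eq:Kolmogorov}). Having pinned down $\Omega$, I would substitute the constant sequence $\lambda i\,\Omega(i)$ into the $\Pi_2^0$-formula $Q(\alpha,z)$ from the preceding theorem, obtaining a predicate $R(z) := Q(\lambda i\,\Omega(i),z)$ in the single free variable $z$.

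The next step is to verify the complexity bound. The formula $Q$ is $\Pi_2^0$ over $KC\times\{0,1\}^\omega$, so it has the shape $\forall_m\exists_{n>m}\,\varphi(\alpha,z,m,n)$ with $\varphi$ recursive (this is exactly the unpacked form coming from (\ref{equation:predicateone}) together with the observation that $|x_\alpha(t)-z| < 2^{-n}$ is $\Sigma_1^0$ in $\alpha,z,t,n$, and that $t$ ranges over the recursive finite sets $D_n$). Substituting a $\Delta_2^0$ oracle $\Omega$ for $\alpha$ replaces the recursive matrix $\varphi$ by a $\Delta_2^0$ (equivalently $\Sigma_2^0\cap\Pi_2^0$) matrix, and prefixing the $\forall_m\exists_{n}$ block then gives a $\Pi_3^0$ predicate; I would spell this out via the standard normal-form arithmetic that a $\Pi_2^0$ matrix under a universal-existential prefix collapses into $\Pi_3^0$. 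Here one also uses Theorem \ref{theorem:isomorphism}: the value $x_\Omega(t) = \Phi(\Omega)(t)$ at a rational $t$ is computable to any precision uniformly from $\Omega$, which is what licenses treating $x_\Omega$ as an object definable relative to the $\Delta_2^0$ real $\Omega$.

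It then remains to identify the set $K := \{z : R(z)\}$ as a Hamel set. By the preceding theorem, $R(z)$ holds exactly when $z\in x_\Omega(E)$. Since $\Omega\in KC$, the function $x_\Omega=\Phi(\Omega)$ is a complex oscillation, so Theorem \ref{th:Hamel} applies and the elements of $x_\Omega(E)$ are linearly independent over $\mathbb{Q}$. To conclude that $K$ is a \emph{Hamel set} in the sense defined in Section \ref{section:hamel}, I must also check that $x_\Omega(E)$ is perfect: $E$ is a perfect (Cantor-type) subset of the unit interval, and $x_\Omega$, being continuous and — as a sample path of a generic Brownian motion, by the results of \cite{fo:5} — injective on $E$, carries the perfect set $E$ homeomorphically onto a perfect set. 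Hence $K=x_\Omega(E)$ is perfect and $\mathbb{Q}$-linearly independent, i.e. a Hamel set, and $R$ is the desired $\Pi_3^0$-predicate.

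The main obstacle I anticipate is not conceptual but bookkeeping: getting the arithmetical complexity count exactly right when a $\Delta_2^0$ oracle is plugged into a $\Pi_2^0$ matrix, and making sure no hidden quantifier (for instance the implicit $\exists k$ in the $\Sigma_1^0$ unfolding of $|x_\alpha(t)-z|<2^{-n}$, or the finite disjunction over $t\in D_n$) pushes the total past $\Pi_3^0$. Care is also needed that the perfectness and injectivity of $x_\Omega$ on $E$ really do follow from \cite{fo:5} / Theorem \ref{th:Hamel} rather than being assumed; linear independence over $\mathbb{Q}$ in particular already forces distinctness of the images, which gives injectivity for free, so this point should be routine once stated.
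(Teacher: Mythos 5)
Your approach is exactly the paper's: fix a $\Delta_2^0$ member $\Omega$ of $KC$ (a Chaitin real), substitute into the $\Pi_2^0$-formula $Q$ from the preceding theorem, and take $R(z):=Q(\lambda i\,\Omega(i),z)$, which is $\Pi_3^0$ and carves out the Hamel set $x_\Omega(E)$; the paper's own proof consists of nothing more than these two substitutions, so your additional remarks on injectivity and perfectness of $x_\Omega(E)$ supply detail the paper leaves implicit, and they are sound. One slip in the complexity bookkeeping is worth correcting: the normal-form rule you want is that a $\Sigma_2^0$ matrix under a $\forall\exists$ prefix collapses to $\Pi_3^0$ (since $\forall m\,\exists n\,\exists u\,\forall v$ merges to $\forall\exists\forall$), and this is available because the $\Delta_2^0$ matrix has a $\Sigma_2^0$ normal form; as phrased, the claim that ``a $\Pi_2^0$ matrix under a universal-existential prefix collapses into $\Pi_3^0$'' is not a valid rule (it would give $\Pi_4^0$).
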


\section{Further developments and an open problem}
We write $S_\infty$ for the symmetric group of a countable set . We place on $S_\infty$ the pointwise
convergence  topology thus giving  $S_\infty$ the subspace topology under its embedding into the Baire space $\mathbb{N}^\mathbb{N}$. The group $S_\infty$ acts naturally (and continuously) on $(0,1)^\infty_{\neq}$:
$$\sigma.(u_j: j \geq 1) := (u_{\sigma^{-1}(j)}: j \geq 1),$$
for all $(u_j) \in (0,1)^\infty_{\neq}$ and $\sigma \in S_\infty$ . The orbit space under this action is denoted by $(0,1)^\infty_{\neq}/S_\infty$. The Borel structure on this space is given by the topology induced by the canonical mapping 
$$\pi:(0,1)^\infty_{\neq} \longrightarrow (0,1)^\infty_{\neq}/S_\infty.$$
If $X$ is a continuous function on the unit interval, then a {\it local minimizer} of $X$ is a point $t$ such that there is some closed interval $I \subset[0,1]$ containing $t$ such that the function $X$ assumes a minimum value on $I$ at the point $t$. We denote by $MIN(X)$ the set of local minimizers of $X$.

It is well-known that if $X$ is a continuous version of Brownian motion on the unit interval, then $MIN(X)$ is almost surely a dense and countable set and that all the local minimizers of $X$ are {\it strict}. This means that, for each closed subinterval $I$ of the closed unit interval, there is a unique $\nu \in I$ where the minimum of $X$ on $I$ is assumed.

 This has the implication that there is a subset $\Omega_0$ of $C[0,1]$ of full Wiener measure such that one can define a measurable mapping  $min: C[0,1] \supset \Omega_0  \longrightarrow (0,1)^\infty_{\neq} $ in such a way that the composition of $min$ with the projection $\pi$ will define a mapping $X \mapsto MIN(X)$. In the sequel this strongly random set will be denoted by $MIN$. To summarise, we have the following commutative diagram:

\[\bfig
\Vtriangle/>`>`>/[C\hbox{[}0,1\hbox{]} \supset \Omega_0`(0,1)^\infty_{\neq}.`(0,1)^\infty_{\neq}/S_\infty;\min`\hbox{MIN}`\pi]
\efig\]
Let $(m_k)$ be any random enumeration of the local minimizers of a continuous version of Brownian motion in the unit interval .

Let $q >2$ and for $k \geq 1$ set 
$$s_k = (1+m_k).$$

\begin{Theorem}

The sequence $(s_k)$ is linearly independent over $\mathbb{Q}$.
\label{theorem:minimizers}
\end{Theorem}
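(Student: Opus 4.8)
The plan is to prove the following stronger statement: almost surely, the set $\{1\}\cup MIN(X)$ is linearly independent over $\mathbb{Q}$; equivalently, there is no finite family of pairwise distinct local minimizers $t_1,\dots,t_n$ of $X$, no rational vector $(q_1,\dots,q_n)\neq 0$, and no $c\in\mathbb{Q}$ with $q_1t_1+\cdots+q_nt_n=c$. Theorem~\ref{theorem:minimizers} follows at once: a nontrivial rational relation $\sum_j q_j(1+m_{k_j})=0$ among the $s_k$, with the indices $k_j$ pairwise distinct, rearranges to $\sum_j q_j m_{k_j}=-\sum_j q_j\in\mathbb{Q}$, and since any enumeration of the (almost surely infinite) set $MIN(X)$ is injective, the $m_{k_j}$ are pairwise distinct elements of $MIN(X)$ --- contradicting the stronger statement.

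First I would reduce the stronger statement to a countable union of null events. Fix $n\geq1$, a nonzero $(q_1,\dots,q_n)\in\mathbb{Q}^n$, and $c\in\mathbb{Q}$; it suffices to show that, almost surely, no $n$ pairwise distinct local minimizers satisfy $\sum_i q_i t_i=c$. For an $n$-tuple $\vec J=(J_1,\dots,J_n)$ of pairwise disjoint closed subintervals of $[0,1]$ with rational endpoints, let $\mu_i(X)$ be the point of $J_i$ at which $X$ attains its minimum over $J_i$; this is almost surely unique and is then a measurable functional of $X$. The key reduction uses the fact --- recorded above --- that all local minimizers of Brownian motion are almost surely \emph{strict}: if $X$ has distinct local minimizers $t_1,\dots,t_n$ with $\sum_i q_it_i=c$, choose, for each $i$, a closed interval $[a_i,b_i]$ on which $X$ attains its minimum uniquely at $t_i$; then pick $\delta>0$ with the intervals $(t_i-\delta,t_i+\delta)$ pairwise disjoint and contained in the corresponding $(a_i,b_i)$, and rational numbers $a_i'<t_i<b_i'$ inside $(t_i-\delta,t_i+\delta)$. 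With $J_i:=[a_i',b_i']$ the intervals $\vec J$ are pairwise disjoint and rational, and by strictness $\mu_i(X)=t_i$. Hence the event in question is contained in the countable union over all such $\vec J$ of the events $A_{\vec J}:=\{X:\sum_i q_i\mu_i(X)=c\}$, and everything comes down to proving $W(A_{\vec J})=0$ for each fixed $\vec J$.

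To see that $W(A_{\vec J})=0$, note that $\mu_i(X)=a_i+\operatorname{argmin}_{0\le u\le b_i-a_i}\big(X(a_i+u)-X(a_i)\big)$, and since shifting a function by a constant does not move its minimizer, $\mu_i(X)$ is a functional of the increments of $X$ over $J_i$ only. Because the intervals $J_i$ are pairwise disjoint, the increments of $X$ over them are independent, so $\mu_1(X),\dots,\mu_n(X)$ are independent. Moreover, by L\'evy's arcsine law for the location of the extremum of Brownian motion over an interval, each $\mu_i(X)$ has an absolutely continuous law on $J_i$ (with density proportional to $(s-a_i)^{-1/2}(b_i-s)^{-1/2}$). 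Consequently the joint law of $(\mu_1(X),\dots,\mu_n(X))$ is absolutely continuous on $J_1\times\cdots\times J_n$, and since $(q_1,\dots,q_n)\neq0$ the affine hyperplane $\{\sum_i q_i s_i=c\}$ is Lebesgue-null there; hence $W(A_{\vec J})=0$. Taking the union over the countably many triples $(n,(q_1,\dots,q_n),c)$ and the countably many admissible tuples $\vec J$ finishes the argument.

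The step I expect to require the most care is the covering reduction: one has to exploit the strictness of the local minimizers to realize any prescribed finite family of distinct minimizers \emph{simultaneously} as exact minimizers over pairwise disjoint rational intervals, which is precisely what turns the problem into a countable union of events of the clean form $A_{\vec J}$. The independence and absolute continuity invoked in the last paragraph are classical facts about Brownian motion; in the language of this paper they are exactly the structural information on the random set $MIN$ that Tsirelson's theory of countable dense random sets makes available, and an alternative write-up could deduce $W(A_{\vec J})=0$ directly from the absolute continuity of the finite-dimensional marginals of $MIN$.
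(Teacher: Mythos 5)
Your argument is correct, but it reaches the result by an entirely different route from the paper's. The paper does not touch the sample paths of $X$ at all: it observes that the event ``$(1+m_k)_k$ is $\mathbb{Q}$-linearly independent'' is a Borel subset of $CS(0,1)=(0,1)^\infty_{\neq}/S_\infty$ that is invariant under permutations of the indices $k$, invokes Tsirelson's identification $MIN\sim U$ to replace the enumeration of local minimizers by an i.i.d.\ uniform sequence $(u_k)$, and then cites Meyer (pp.\ 256--260) for the fact that the corresponding statement for a uniform sequence is classical; the zero-one structure supplied by Hewitt--Savage underlies the transfer. Your proof instead works directly on the Brownian side: you decompose the complementary event over the countably many data $\left(n,(q_1,\dots,q_n),c,\vec J\right)$, use strictness of the local minima to cover a given rational relation among $n$ distinct minimizers by an event $A_{\vec J}$ indexed by pairwise disjoint rational intervals, and kill each $A_{\vec J}$ by combining independence of Brownian increments over disjoint intervals with L\'evy's arcsine law for the argmin. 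What this buys you is an elementary, self-contained proof --- no appeal to Tsirelson or to Meyer --- which moreover yields the slightly stronger statement that $\{1\}\cup MIN(X)$ is a.s.\ $\mathbb{Q}$-linearly independent; what you lose is the conceptual economy of the paper's two-line transfer via $MIN\sim U$, which is exactly the mechanism the section is meant to showcase. One point you should state explicitly: your covering step needs each $t_i$ to lie in the \emph{interior} of a rational interval on which it is the unique minimizer, so you are reading ``local minimizer'' in the two-sided sense; this is the intended reading (it is the only one under which $MIN(X)$ is countable, and it is what Tsirelson uses), but the paper's displayed definition does not literally require $t$ to be interior to the witnessing interval, so a brief remark pinning down the convention would close that gap.
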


Proof: Let $\Omega$ be a standard Borel space. A {\it strongly } countable set in the unit interval is a measurable mapping  $X: \Omega \rightarrow (0,1)^\infty_{\neq}/S_\infty$ that factors through some (traditional) random sequence $Y$ as shown:
\[\bfig
\Vtriangle/>`>`>/[\Omega`(0,1)^\infty_{\neq}.`(0,1)^\infty_{\neq}/S_\infty;Y`X`\pi]
\efig\]
One can think of $X$ as a random countable {\it set} induced via $S_\infty$-equivalence, by a random {\it sequence} $Y$, both in the unit interval. Denote the Borel space $(0,1)^\infty_{\neq}/S_\infty$ by $CS(0,1)$.

For standard measure spaces $(\Omega_1,P_1)$ and $(\Omega_2,P_2)$, let  there be   some $P_i$-measurable strongly random variable $X_i:\Omega _i\rightarrow CS(0,1)$  such that the induced probability distributions on $CS(0,1)$ are the same.  %Then there is a probability distribution $\mathbb{P}$ on $\Omega_1 \times \Omega_2$ such that the marginal of $\mathbb{P}$ to $\Omega_i$ is $P_i$, and moreover,  for $\mathbb{P}$ almost all $(\omega_1,\omega_2) \in \Omega_1 \times \Omega_2$  it is the case that $$X_1(\omega_1)=X_2(\omega_2).$$
\newline
We say in this case that the strongly random sets $X_1$ and $X_2$ are {\it statistically similar} relative to the probabilities $P_1,P_2$ and we  write  $X_1 \sim X_2$. This means exactly that
$$P_1(X_1^{-1}(\Sigma))=P_2(X_2^{-1}(\Sigma)),$$
For all Borel subsets $\Sigma$ of $CS(0,1)$.

Write $\lambda^\infty$ for the product measure on $(0,1)^\infty$ which is the countable product of the Lebesgue measure $\lambda$ on the unit interval and write $\Lambda$ for the measure on $CS(0,1)$ 
which is the pushout of $\lambda^\infty$ under $\pi$. In other words, for a Borel subset $\Sigma$ of $CS(0,1)$,
$$\Lambda(\Sigma) =\lambda^\infty(\pi^{-1}\Sigma).$$

Write $U:(0,1)^\infty \rightarrow CS(0,1)$ for the strictly random set as defined by the following commutative diagram:
%\rightarrow (0,1)^\infty/S_\infty$ that factors via some (traditional) random sequence $Y$ as shown:
\[\bfig
\Vtriangle/>`>`>/[(0,1)^\infty_{\neq}`(0,1)^\infty_{\neq}.`CS(0,1)=(0,1)^\infty_{\neq}/S_\infty;Id`U`\pi]
\efig\]

 In statistics $U$ is a model of an unordered uniform infinite sample. Moreover, it follows from the Hewitt-Savage theorem, that for every Borel subset $\Sigma$ of $CS(0,1)$, it is the case that 
\begin{equation}
\Lambda(\Sigma) \in \{0,1\}.
\label{equation:Hewitt-Savage}
\end{equation}
Note that $\Lambda$ is non-atomic.

In \cite{tsi:1} Tsirelson proved  the truly remarkable result that
\begin{equation}
MIN \sim U.
\end{equation}  

The theorem with the uniform sequence $(u_k)$ replacing the local minimizers $(m_k)$ is known to be true. (See pp 256-260  in Meyer \cite{meyer:1}.) The set $E$ remains invariant under permutations of the indices $k$. Hence the theorem  follows from the statistical similarity of $MIN$ and $U$.

{\bf Open problem}. In \cite{fo:3} the author showed how the local minimizers of a complex oscillation $\Phi(\alpha)$ can be computed from a KC-string $\alpha$. This opens the possibility of finding analogues of Theorem \ref{theorem:minimizers} for complex oscillations.

Let us call a continuous function $x$ on the unit interval strongly random if it belongs to every $\Sigma_2^0(\mathcal{G})$ set of Wiener measure one, for some gaussian algebra $\mathcal{G}$. The set of strongly random functions is a subclass of the complex oscillations. By using the constructions in \cite{fo:3}, it can be shown that thye $s_k$ associated with a srongly random function will be linearly independent over the rationals. Whether this result can be extended to complex oscillations, is an open problem.
%%%%%%%%%%%%%%%%%%%%%%%%%%%%%%%%%%%%%%%%%%%%%%%%%%%%%%%%%%%%%%%%%%%%%%%%%%%%%%%

\end{document}